\newcommand{\be}{\begin{equation}}
\newcommand{\ee}{\end{equation}}
\newcommand{\derv}[1]{\frac{\partial}{\partial #1}}
\newcommand{\dtwo}[2]{\frac{\partial^2 #1}{\partial #2^2}}
\newcommand{\deriv}[2]{\frac{\partial #1}{\partial #2}}
\newcommand{\beqn}{\begin{equation}}
\newcommand{\eeqn}{\end{equation}}
\newcommand{\beqnar}{\begin{eqnarray}}
\newcommand{\eeqnar}{\end{eqnarray}}
\newtheorem{theorem}{Theorem}[section]
\newtheorem{proposition}[theorem]{Proposition}
\newenvironment{example}[1][Example]{\begin{trivlist}
\item[\hskip \labelsep {\bfseries #1}]}{\end{trivlist}}
\newenvironment{remark}[1][Remark]{\begin{trivlist}
\item[\hskip \labelsep {\bfseries #1}]}{\end{trivlist}}
         \let\leq=\leqslant
\providecommand\boldsymbol[1]{\mbox{\boldmath $##1$}}}
\newsavebox{\astrutbox}
\sbox{\astrutbox}{\rule[-5pt]{0pt}{20pt}}
\title[Potential Vorticity in Magnetohydrodynamics]
{Potential Vorticity in Magnetohydrodynamics}
\author[G. M. Webb and R. L. Mace]%
{G.\ns M.\ns W\ls E\ls B\ls B$^1$%
 \thanks{Email address for correspondence: gmw0002@uah.edu}\ns
\and R.\ls L.\ns M\ls A\ls C\ls \ls E$^{2}$}
\affiliation{$^1$Center for Space Plasma and Aeronomic Research, 
The University of Alabama in Huntsville, 
Huntsville AL 35805, USA\\[\affilskip]
$^2$School of Chemistry and Physics, The University of Kwa-Zulu Natal,\\ 
Durban Westville, Durban, Natal 4000, South Africa}
\begin{document}


\maketitle

\begin{abstract}
A version of Noether's second theorem 
using Lagrange multipliers 
is used to investigate fluid relabelling symmetries 
conservation laws in magnetohydrodynamics (MHD).  
 We obtain a new generalized potential vorticity type conservation 
equation for MHD 
which takes into account entropy gradients and the ${\bf J}\times{\bf B}$
force on the plasma due to the current ${\bf J}$ and magnetic 
induction ${\bf B}$. This new conservation law for MHD is derived 
by using Noether's second theorem in conjunction with a class of fluid 
relabelling symmetries in which the symmetry generator for the Lagrange label
transformations is non-parallel to the magnetic field induction in Lagrange label space. This is associated with an Abelian Lie pseudo algebra and a foliated 
phase space in Lagrange label space. 
It contains as a special case Ertel's theorem in ideal fluid mechanics. 
 An independent derivation shows that
the new conservation law is also valid for more general physical situations. 
\end{abstract}


\maketitle


\section{Introduction}
Variational methods are widely used in physics, engineering and applied 
mathematics. Noether's theorems provide a route to deriving conservation 
laws for systems of differential  equations governed by an action principle.
Noether's theorem applies to systems of Euler-Lagrange equations 
that are in Kovalevskaya form (e.g \cite{Olver93}).
For other Euler-Lagrange systems, each nontrivial variational symmetry leads to a conservation law, 
but there is no guarantee that it is non-trivial.

The main aim of the present paper is to derive a generalized potential vorticity 
equation in MHD by using a version of Noether's second theorem  obtained by 
\cite{Hydon11}. The analysis  is related to recent work on MHD conservation 
laws by \cite{Webb14a, Webb14b}.  
{\bf  \cite{Cheviakov14} has derived new conservation laws for  fluid systems
involving vorticity and vorticity related equations  (potential type systems)
 including MHD and Maxwell's equations. One can use Cheviakov's scheme to derive the potential vorticity 
equation derived in the present analysis, which we indicate below (the 
actual potential vorticity equation we obtain is not explicitly 
given in Cheviakov (2014)). 

In ideal fluid mechanics, Ertel's theorem is usually written in the form:
\begin{equation}
\frac{d}{dt}\left(\frac{\boldsymbol{\omega}{\bf\cdot}\nabla\psi}{\rho}\right)
=0\quad\hbox{where}\quad \boldsymbol{\omega}=\nabla\times {\bf u}, \label{eq:1.1}
\end{equation}
is the fluid vorticity and $d/dt=\partial/\partial t+{\bf u}{\bf\cdot}\nabla$ is the Lagrangian time derivative 
moving with the flow, ${\bf u}$ is the fluid velocity and $\psi$ is a scalar advected with the 
flow, i.e. $d\psi/dt=0$. In addition it is required that $\nabla\psi\times\nabla S=0$ for (\ref{eq:1.1})
to apply. The quantity $I_e=\boldsymbol{\omega}{\bf\cdot}\nabla\psi/\rho$ is known as the potential vorticity or 
Ertel's invariant (e.g. \cite{Pedlosky87}; \cite{Salmon82, Salmon88}, \cite{Padhye96a, Padhye96b}, \cite{Padhye98}). 
The Eulerian version of (\ref{eq:1.1}) is:
\begin{equation}
\derv{t}\left(\boldsymbol{\omega}{\bf\cdot}\nabla\psi\right)
+\nabla{\bf\cdot}\left[(\boldsymbol{\omega}{\bf\cdot}\nabla\psi){\bf u}\right]=0. \label{eq:1.2}
\end{equation}
The generalized potential vorticity equation for MHD obtained in the present paper has the 
form:
\begin{equation}
\derv{t}\left(\boldsymbol{\omega}{\bf\cdot}\nabla\psi\right)
+\nabla{\bf\cdot}\left[(\boldsymbol{\omega}{\bf\cdot}\nabla\psi){\bf u} 
-\left(T\nabla S+\frac{{\bf J}\times{\bf B}}{\rho}\right)\times\nabla\psi\right]=0, 
\label{eq:1.3}
\end{equation}
where $T$ is the temperature of the gas, $S$ is the entropy and ${\bf J}=\nabla\times{\bf B}/\mu_0$ is the 
MHD current density.

Cheviakov's generalized vorticity conservation law applies to systems of partial differential equations of the 
form:
\begin{equation}
\nabla{\bf\cdot}{\bf N}=0,\quad \deriv{\bf N}{t}+\nabla\times{\bf M}=0. \label{eq:1.4}
\end{equation}
The system (\ref{eq:1.4}) has potential vorticity like conservation laws of the form:
\begin{equation}
\derv{t}\left({\bf N}{\bf\cdot}\nabla F\right)
+\nabla{\bf\cdot}\left({\bf M}\times\nabla F-F_t{\bf N}\right)=0, 
\label{eq:1.5}
\end{equation}
where $F(t,x,y,z)$ is an arbitrary function of $(t,x,y,z)$. The choices:
\begin{align}
{\bf N}=&\boldsymbol{\omega}=\nabla\times{\bf u}, \quad F=\psi({\bf x},t), \nonumber\\
{\bf M}=&-{\bf u}\times\boldsymbol{\omega}-\left(T\nabla S+\frac{{\bf J}\times{\bf B}}{\rho}\right), \label{eq:1.6}
\end{align}
in Cheviakov's conservation law (\ref{eq:1.5}) gives the potential vorticity type conservation law 
(\ref{eq:1.3}) obtained in our analysis, where $\psi$ is advected with the flow. The conservation law 
(\ref{eq:1.3}) was also given in a previous unpublished paper on the arxiv at http://arxiv.org/abs/1403.3133}. 
 
The emphasis of the present paper is on the connection between the new conservation law (\ref{eq:1.3}) 
 and non-field aligned fluid relabelling symmetries in MHD, and its derivation  
using Noether's second theorem. The work of \cite{Cheviakov14} on conservation 
laws for potential and vorticity equations is related to nonlocal
symmetries of partial differential equation (pde) systems (e.g. \cite{Bluman10}). 
\cite{Kelbin14} obtain new conservation laws for helically symmetric, plane, and rotationally 
symmetric flows.    
 \cite{Webb14a} derived MHD conservation laws 
using Lie dragging techniques (see also \cite{Tur93}).
\cite{Webb14b}, obtained advected 
invariants (i.e. advected invariant geometrical objects) in 
MHD and gas dynamics by using Noether's first and second theorems and 
the Euler Poincar\'e approach to variational problems in fluids and 
MHD (e.g. \cite{Holm98}, \cite{Cotter07}, \cite{Cotter13}).
These invariants are associated with the 
 fluid relabelling symmetries. \cite{Webb14c} 
developed a multi-symplectic formulation of ideal
 MHD and  gas dynamics equations based 
on Clebsch potentials (see also \cite{Cotter07} for related work). 

In this paper, we apply  Noether's second theorem 
(\cite{Hydon11}) to derive 
conservation laws for the equations. 
 \cite{Rosenhaus02} considers similar variational 
problems, and determines the effects of boundary conditions on Noether's 
second theorem. In particular, we derive a generalized potential vorticity 
conservation equation for MHD using the MHD fluid relabelling symmetries.

The MHD equations admit the ten-parameter Galilei Lie group. 
This includes the space and time translation symmetries, 
the space rotations and the 
Galilean boosts (e.g. \cite{Fuchs91}, \cite{Grundland95}, 
\cite{Webb07}). These symmetries give rise via Noether's first 
theorem to the (a) energy 
conservation law due to the time translation symmetry (b)\ the momentum 
conservation laws (space translation symmetries), (c)\  
angular momentum conservation laws (rotational symmetries) 
and (d) the center of mass conservation laws
(Galilean boosts symmetries). For a polytropic equation of state for the gas 
with $p\propto\rho^\gamma$ there are three extra scaling symmetries of the MHD 
and gas dynamic equations (\cite{Fuchs91}), which can be combined to give 
conservation laws (e.g. \cite{Webb07,Webb09b}). For $\gamma=(n+2)/n$ 
(n a positive integer) the equations also admit the projective symmetry 
(e.g. \cite{Ibragimov85}).  

In addition to the Galilei group there is a class of 
infinite dimensional fluid 
relabelling symmetries that leave the MHD equations invariant under 
transformation of the Lagrangian fluid labels. The fluid relabelling 
symmetries conservation laws are associated with Noether's second theorem 
(e.g. \cite{Salmon82,Salmon88}, \cite{Padhye96a,Padhye96b}; 
\cite{Padhye98}, 
 \cite{Ilgisnos00}, \cite{Kats03}, 
\cite{Webb05}, \cite{Webb07}, \cite{Cotter13}).

In Section 2, the MHD equations and the Lagrangian map are introduced. 
In Section 
3 we describe Noether's first and second theorems. We outline 
the formulation of Noether's second theorem using Lagrangian constraints, 
developed by \cite{Hydon11}.  
 In Section 4 we  formulate  Noether's theorems in MHD
using the Lagrangian MHD approach of \cite{Newcomb62}.  
We apply the Lagrange multiplier analysis of \cite{Hydon11}
to obtain Noether's second theorem in MHD, which is subsequently used 
to derive the new potential vorticity type conservation law 
for MHD. 
Section 5 concludes with a summary and discussion. 


\section{The MHD equations and the Lagrangian Map}

The time dependent MHD equations  consist of the mass, 
momentum and entropy advection equations:
\beqnar
&&{\deriv{\rho}{t}}+\nabla{\bf\cdot}(\rho{\bf u})=0, \label{eq:2.1}\\
&&\rho\left({\deriv{\bf u}{t}}+{\bf u\cdot\nabla u}\right)=-\nabla p
+{\bf J}\times {\bf B}-\rho\nabla\Phi, \label{eq:2.2}\\
&&{\deriv{S}{t}}+{\bf u}{\bf\cdot\nabla}S=0, \label{eq:2.3}
\eeqnar
coupled with Maxwell's equations in the MHD limit:
\beqnar
&&{\deriv{\bf B}{t}}=\nabla\times\left({\bf u}\times {\bf B}\right), 
\label{eq:2.4}\\
&& {\bf J}={\frac{\nabla\times {\bf B}}{\mu_0}},\quad \nabla{\bf\cdot} {\bf B}
=0. \label{eq:2.5}
\eeqnar
 Equations (\ref{eq:2.4})-(\ref{eq:2.5}) correspond to Faraday's
induction equation, Ampere's law for the current ${\bf J}$ and Gauss's 
equation $\nabla{\bf\cdot}{\bf B}=0$.  ${\bf F}_g=-\rho\nabla \Phi$ 
is the force due to an external gravitational potential 
$\Phi$ . The above equations are supplemented by an equation of 
state for the gas internal energy density $\varepsilon=\varepsilon(\rho,S)$. 
 For an ideal gas, the first law of 
thermodynamics gives:
\beqn
p=\rho{\deriv{\varepsilon}{\rho}}-\varepsilon,\quad \rho T
={\deriv{\varepsilon}{S}}, \label{eq:2.6}
\eeqn
for the pressure $p$ and the temperature $T$ of the gas.

\subsection{The Lagrangian map and variational formulation}

The Lagrangian map: ${\bf x}={\bf X}({\bf x}_0,t)$ is obtained by integrating 
the fluid velocity equation $d{\bf x}/dt={\bf u}({\bf x},t)$, subject to 
the initial condition ${\bf x}={\bf x}_0$ at time $t=0$. 
 In this approach, the 
mass continuity equation and entropy advection equation are replaced 
by the equivalent algebraic equations:
\beqn
\rho={\frac{\rho_0({\bf x}_0)}{J}}, \quad S=S({\bf x}_0), \label{eq:2.7}
\eeqn
where
\beqn
J=\det(x_{ij})\quad\hbox{and}\quad x_{ij}={\deriv{x^i({\bf x}_0,t)}{x_0^j}}. 
\label{eq:2.8}
\eeqn
Similarly, Faraday's equation (\ref{eq:2.4}) has the formal solution for the
magnetic field induction ${\bf B}$ of the form:
\beqn
B^i={\frac{x_{ij}B_0^j}{J}},\quad \nabla_0{\bf\cdot}{\bf B}_0=0. \label{eq:2.9}
\eeqn
The solution (\ref{eq:2.9}) for $B^i$ is equivalent to the frozen in 
field theorem in MHD (e.g. \cite{Stern66}, \cite{Parker79}), and the 
initial condition $\nabla_0{\bf\cdot}{\bf B}_0=0$
is imposed in order to ensure that Gauss's law $\nabla{\bf\cdot}{\bf B}=0$ 
is satisfied. 



The action for the MHD system is:
\beqn
A=\int \int \ell\ d^3x dt\equiv \int\int \ell^0\  d^3x_0 dt, 
\label{eq:2.15} 
\eeqn
where
\beqn
\ell={\frac{1}{2}}\rho |{\bf u}|^2-\varepsilon(\rho,S)-{\frac{B^2}{2\mu}}
-\rho\Phi,\quad \ell^0=\ell J, \label{eq:2.16}
\eeqn
are the Eulerian ($\ell$) and Lagrangian ($\ell^0$) Lagrange densities
respectively. Using (\ref{eq:2.7})-(\ref{eq:2.9}), and (\ref{eq:2.16}) 
we obtain:
\beqn
\ell^0={\frac{1}{2}}\rho_0 |{\bf x}_t|^2
-J\varepsilon\left({\frac{\rho_0}{J}},S\right)
 -{\frac{x_{ij}x_{is} B_0^j B_0^s}{2\mu J}}-\rho_0\Phi, \label{eq:2.17}
\eeqn
for $\ell^0$. Note that in the Lagrange density 
$\ell^0=\ell^0({\bf x}_0,t; {\bf x}, {\bf x}_t, x_{ij})$, 
${\bf x}_0$ and $t$ are the independent variables, and ${\bf x}$ 
and its derivatives with respect to ${\bf x}_0$ and $t$ are dependent
variables.

Stationary point conditions for the action (\ref{eq:2.15})
gives the Euler-Lagrange equations:
\beqn
{\frac{\delta A}{\delta x^i}}={\deriv{\ell^0}{x^i}}
-{\derv{t}}\left({\deriv{\ell^0}{x^i_t}}\right)
-{\derv{x_0^j}}\left({\deriv{\ell^0}{x_{ij}}}\right)=0,
 \label{eq:2.18}
\eeqn
where $x_{ij}\equiv \partial x^i/\partial x_0^j$. 
Evaluation of the variational derivative (\ref{eq:2.18}) 
gives the Lagrangian momentum equation for the system in 
the form (\cite{Newcomb62}):
\beqn
\rho_0\left({\dtwo{x^i}{t}}+{\deriv{\Phi}{x^i}}\right)
+{\derv{x_0^j}}\left\{ A_{kj} \left[
\left(p+{\frac{B^2}{2\mu}}\right)\delta^{ik}-{\frac{B^i B^k}{\mu_0}}\right]
\right\}=0, \label{eq:2.19}
\eeqn
where $A_{kj}=\hbox{cofac}(x_{kj})$. Dividing (\ref{eq:2.19}) 
by $J$, and using the fact that $\partial A_{kj}/\partial x_0^j=0$, 
gives the Eulerian form of the momentum equation (\ref{eq:2.2}).

Equation (\ref{eq:2.19}) can be reduced to a system of three coupled 
nonlinear wave equations for ${\bf x}={\bf x}({\bf x}_0,t)$ in which 
$\rho({\bf x}_0)$, $S({\bf x}_0)$, and ${\bf B}_0({\bf x}_0)$ are 
given functions of ${\bf x}_0$ (see e.g. \cite{Webb07}). 

\section{Noether's Theorems}

Noether's theorems describes the relationship between conservation laws, 
and Lie symmetries of differential equation systems represented by 
a variational principle. In Section 3.1 we outline Noether's first theorem, which applies 
to conservation laws associated with a variational principle where the action is invariant 
under a finite Lie group. Section 3.2 describes Noether's second theorem. 
Noether's second theorem applies when the action is invariant under an 
infinite dimensional pseudo-Lie group, in which case there are in general dependencies between 
the Euler Lagrange equations for the system.  
%


\subsection{Noether's First Theorem}

Noether's first theorem concerns the form of a conservation law 
for a system of partial differential equations described by an 
action principle for the case where there are a finite number of 
symmetries, or for a finite 
Lie group of symmetries that leave the action invariant under variational or 
divergence symmetries. Noether's second theorem applies when the action is 
invariant under an infinite dimensional Lie pseudo group, in which case there 
are relationships between the Euler Lagrange equations that must be 
taken into account (e.g. \cite{Noether18}, \cite{Padhye96a,Padhye96b}, 
\cite{Hydon11}, \cite{Cotter13}). In this case,
the generators of the Lie pseudo algebra depend on arbitrary functions of the
independent variables. 

In general, the problem concerns the variational and divergence symmetries 
of the action
\begin{equation}
J=\int L({\bf x},[{\bf u}])\ d{\bf x}, \label{eq:C1}
\end{equation}
where $[{\bf u}]$ denotes the dependent variables and their derivatives 
with respect to the independent variables ${\bf x}$. In Section 4, we use ${\bf u}$ to 
denote the fluid velocity, which  should
not cause any confusion because of the different context.
 The Euler Lagrange equations describing the equation 
system of interest are obtained by varying the action $J$:
\begin{equation}
\delta_u J([{\bf u}])= \int \deriv{L}{u^\alpha_{,I}} \delta u^\alpha_{,I}\
 d {\bf x}=\int \deriv{L}{u^\alpha_{,I}} {\bf D}_I(\delta u^\alpha)\ d{\bf x}
=\int{\bf E}_\alpha (L) \delta u^\alpha\ d{\bf x}=0, \label{eq:C2}
\end{equation}
where we have dropped  divergence surface terms, that are assumed to vanish 
on the boundary $\partial R$ of the integration region $R$ of interest. 
 The operator $E_\alpha(L)$ in (\ref{eq:C2})
is known as the Euler operator (e.g. \cite{Bluman89}).
Here we use the usual multi-index notation where 
$u^\alpha_{,I}
=\partial u^\alpha/\partial x^{i_1}\partial x^{i_2}\ldots \partial x^{i_n}$, 
i.e. 
$I=i_1i_2i_3\ldots i_n$ for some arbitrary given $n$. 
From (\ref{eq:C2}) we obtain the Euler-Lagrange equations for the system:
\begin{equation}
E_\alpha (L)\equiv (-D)_I \left(\deriv{L}{u^\alpha_{,I}}\right)=0, \quad 
\alpha=1,2,\ldots ,q, \label{eq:C3}
\end{equation}
which are equivalent to the differential equation system of interest. 
In (\ref{eq:C3}) for $I=i_1i_2i_3\ldots j_k$, $(-D)_I$ is defined as:
\begin{equation}
(-D)_I=(-1)^k D_I=(-D_{i_1})(-D_{i_2})\ldots (-D_{i_k}), \label{eq:C3a}
\end{equation}
(see \cite{Olver93}, p. 245). Here $D_i$ is the total partial derivative 
with respect to $x^i$. 
 
In the general case, we are interested in variations that correspond to 
infinitesimal  Lie 
transformations of the form:
\begin{equation}
x'^i=x^i+\epsilon V^{x^i}, \quad u'^\alpha=u^\alpha+\epsilon V^{u^\alpha}, 
\label{eq:C4}
\end{equation}
and under the divergence transformation:
\begin{equation}
L'=L+\epsilon D_i\Lambda^i+O(\epsilon^2). \label{eq:C5}
\end{equation}
The $\Lambda^i$ are gauge potentials. In (\ref{eq:C4}) $V^{x^i}$ and the $V^{u^\alpha}$ 
are the infinitesimal generators of the Lie group transformation used in the analysis.
It is well known 
(e.g. \cite{Bluman89}, \cite{Olver93}) 
that the Euler-Lagrange equations $E_\alpha(L)=0$ 
obtained by searching for stationary point solutions of $\delta_{\bf u}J$ in 
(\ref{eq:C2}) remain invariant (i.e. are the same) under a divergence 
transformation of the form (\ref{eq:C5}). 
If $\Lambda^i=0$, the infinitesimal 
transformations leaving the action invariant are referred to as variational 
symmetries, but if $\Lambda^i\neq 0$ the transformations are referred to as 
divergence transformations (e.g. \cite{Bluman89}). 
The transformations 
(\ref{eq:C4}) are equivalent to the characteristic or canonical Lie 
transformations:
\begin{equation}
{\bf x}'={\bf x}, \quad u'^\alpha=u^\alpha+\epsilon {\hat V}^{u^\alpha}, 
\quad  \hat{V}^{u^\alpha}=V^{u^\alpha}-V^{x^i} u^\alpha_{,i}, 
\label{eq:C6}
\end{equation}
in which the independent variables ${\bf x}$ are fixed (e.g. 
\cite{Ibragimov85},
\cite{Bluman89}, \cite{Olver93}). 

The variation of the action under transformations 
(\ref{eq:C4})-(\ref{eq:C6}) is:
\begin{align}
\delta J=&\lim_{\epsilon\to 0} \frac{1}{\epsilon} \left(
\int_{{\cal R}'} L'({\bf x}',[{\bf u}'])\ d{\bf x}'
-\int_{\cal R} L({\bf x},[{\bf u}])\ d{\bf x}\right)\nonumber\\
=&\int_{\cal R} \left(\tilde{X}L+LD_i V^{x^i}+D_i\Lambda^i\right)
\ d{\bf x}\nonumber\\
\equiv&\int_{\cal R} \left[\hat{X}L+D_i \left(L V^{x^i}+\Lambda^i\right)\right]
\ d{\bf x}. \label{eq:C7}
\end{align}
Here
\begin{equation}
\tilde{X}=\hat{X}+V^{x^i}D_i, \label{eq:C8}
\end{equation}
is the extended Lie transformation operator corresponding to  
 (\ref{eq:C4})  and
\begin{equation}
\hat{X}=\hat{V}^{u^\alpha}\derv{u^\alpha}+D_i\left(\hat{V}^{u^\alpha}\right) 
\derv{u^\alpha_i}+\ldots 
\equiv D_I\left(\hat{V}^{u^\alpha}\right) 
\derv{u^\alpha_I} \label{eq:C9}
\end{equation}
is the extended operator corresponding to the  
characteristic Lie transformations (\ref{eq:C6}). 
From (\ref{eq:C7}) $\delta J=0$ if: 
\begin{equation}
\hat{X}L+D_i\left(L V^{x^i}+\Lambda^i\right)=0. 
\label{eq:C10}
\end{equation}

Using integration by parts to calculate $\hat{X}L$ gives:
\begin{equation}
\hat{X}(L)=\hat{V}^{u^\alpha}E_\alpha(L)
+D_i\left(W^i\left[{\bf u},\hat{V}^{\bf u}\right]\right), \label{eq:C11}
\end{equation}
where
\begin{align}
W^i[{\bf u},{\bf v}]=&v^\gamma\frac{\delta L}{\delta u^\gamma_i}
+v^\gamma_j\frac{\delta L}{\delta u^\gamma_{ij}}
+v^\gamma_{jk}\frac{\delta L}{\delta u^\gamma_{ijk}} +\ldots, \nonumber\\
\frac{\delta L}{\delta\psi}=&\deriv{L}{\psi}-D_i\left(\deriv{L}{\psi_i}\right)
+D_i D_j\left(\deriv{L}{\psi_{ij}}\right)-\ldots. \label{eq:C12}
\end{align}
In (\ref{eq:C12}) $\psi$ refers to the dependent variables $u^\alpha$ 
or any of their derivatives of any order 
The expression for  $\delta L/\delta\psi$ in (\ref{eq:C12}) 
is given by \cite{Ibragimov07,Ibragimov11}  in calculating the surface 
term $W^i[{\bf u},{\bf v}]$. 
Using (\ref{eq:C11}) in (\ref{eq:C10}) the 
Lie invariance condition (\ref{eq:C10}) reduces to:
\begin{equation}
\hat{V}^{u^\alpha}E_\alpha(L)+D_i\left(W^i+L V^{x^i} +\Lambda^i\right)=0. 
\label{eq:C13}
\end{equation}
For a finite number of Lie symmetries, the 
$\hat{V}^{u^\alpha}$ and Euler Lagrange equations 
$E_\alpha(L)=0$ ($\alpha=1,2.\ldots q$), 
are independent. (\ref{eq:C13}) then reduces to the conservation law:
\begin{equation}
D_i\left(W^i+L V^{x^i} +\Lambda^i\right)=0. 
\label{eq:C14}
\end{equation}
Equation (\ref{eq:C14}) is the conservation law for 
Noether's first theorem. 

\subsection{Noether's Second Theorem}

Noether's second theorem (\cite{Olver93}, \cite{Hydon11}) 
expresses the idea that there must exist a relation between the Euler-Lagrange 
equations if the symmetry operator 
$\hat{V}^{\bf u}({\bf x},[{\bf u},{\bf g}])$
depends on an arbitrary function ${\bf g}({\bf x})$. 

\leftline{\it Theorem:\ \cite{Olver93}}
The variational problem (\ref{eq:C2}) admits an infinite 
dimensional group of variational symmetries 
whose characteristics ${\hat V}^{\bf u}({\bf x},[{\bf u},{\bf g}])$ 
depend on an arbitrary function ${\bf g}({\bf x})$ (and its derivatives) 
if and only if there exist differential operators ${\cal D}^1, {\cal D}^2, 
\ldots {\cal D}^q$, not all zero, such that:
\begin{equation}
{\cal D}^1E_1(L)+{\cal D}^2E_2(L)+\ldots +{\cal D}^qE_q(L)=0, \label{eq:C15}
\end{equation}
for all ${\bf x}$ and ${\bf u}$. 

A proof of the theorem is given by \cite{Olver93}. 
\cite{Hydon11}  
 give a simpler proof of the {\it only if} part of the proof, 
(based on \cite{Noether18})
and identify the operators $\left\{{\cal D}^s:\ 1\leq s\leq q\right\}$.

 \cite{Hydon11}, 
apply the operator 
$E_g$ to (\ref{eq:C13}) to obtain:
\begin{equation}
E_g\left\{\hat{V}^{u^\alpha}({\bf x},[{\bf u};g]) 
E_\alpha (L)\right\}=0, \label{eq:C16}
\end{equation}
as the required differential relation (\ref{eq:C15}) between 
the Euler-Lagrange equations. The theorem extends immediately to 
variational symmetries whose characteristics ${\hat V}^{u^\alpha}$ 
depend on $R$ independent arbitrary functions 
${\bf g}=(g^1({\bf x}),\ldots, g^R({\bf x}))$ and their derivatives. 
This gives $R$ differential relations between the Euler Lagrange equations:
\begin{equation}
E_{g^r}\left\{\hat{V}^{u^\alpha}({\bf x},[{\bf u};{\bf g}]) E_\alpha(L)\right\}
=(-D)_I\left(\deriv{\hat{V}^{u^\alpha}({\bf x},[{\bf u};{\bf g}])}{g^r_{,I}}
E_\alpha(L)\right)=0, \quad r=1,\ldots, R. \label{eq:C17}
\end{equation}
It is useful to consider (\ref{eq:C16}) as Euler Lagrange equations for 
the action
\begin{equation}
\hat{J}[{\bf u};{\bf g}]=\int\hat{L}({\bf x};[{\bf u};{\bf g}])\ d{\bf x}, 
\label{eq:C18}
\end{equation}
where
\begin{equation}
\hat{L}({\bf x},[{\bf u};{\bf g}])=\hat{V}^{u^\alpha}({\bf x},[{\bf u};g]) 
E_\alpha (L({\bf x},[{\bf u}])). \label{eq:C19}
\end{equation}

If the functions ${\bf g}=(g^1,g^2,\ldots,g^R)$ 
are subject to $S$ constraints, of the form:
\begin{equation}
{\cal D}_{sr}(g^r)=0,\quad s=1,\ldots, S, \label{eq:C20}
\end{equation}
where the ${\cal D}_{sr}$ are differential operators, then the constraints can 
be incorporated in the Lagrangian $\hat{L}$, by using the modified Lagrangian:
\begin{equation}
\hat{L}({\bf x},[{\bf u};{\bf g}])=\hat{V}^{u^\alpha}({\bf x},[{\bf u};{\bf g}]) 
E_\alpha (L({\bf x},[{\bf u}]))- \nu^s {\cal D}_{sr}(g^r), \label{eq:C21}
\end{equation}
in which the $\nu^s$ are Lagrange multipliers.  

By varying the action (\ref{eq:C18}) and (\ref{eq:C21}) with respect 
to $g^r$, we obtain
\begin{equation}
\delta\left(\int\nu^s{\cal D}_{sr}(g^r)\ d{\bf x}\right)
=\langle\nu^s,{\cal D}_{sr}(\delta g^r)\rangle
= \langle {\cal D}_{sr}^\dagger(\nu^s),\delta g^r\rangle, 
\label{eq:C22}
\end{equation}
where the surface terms have been dropped. 
The angle brackets define the usual inner product for functions. 
Taking variations of (\ref{eq:C18}) and (\ref{eq:C21}) with respect 
to $g^r$ yields
\begin{equation}
\frac{\delta {\hat J}}{\delta g^r}=(-D)_I 
\left(\deriv{{\hat V}^{u^\alpha}({\bf x},[{\bf u},{\bf g}])}{g^r_{,I}} 
E_\alpha(L)\right)-{\cal D}_{sr}^\dagger(\nu^s)=0,\quad  r=1,
\ldots, R. \label{eq:C23}
\end{equation}
Thus, if $S<R$, it may be possible to eliminate the Lagrange multipliers 
in (\ref{eq:C23}). In any event, (\ref{eq:C23}) relates the Lagrange multipliers $\nu^s$ to the solutions of the original Euler Lagrange 
equations (\ref{eq:C3}).

\section{MHD Conservation Laws and Symmetries}
We consider Noether's theorems in Lagrangian MHD 
(e.g. \cite{Webb05} and \cite{Webb07}), 
which may be used to derive 
MHD conservation laws.  
   
\begin{proposition}[Noether's theorem]\label{prop4.1}
If the action (\ref{eq:2.15}) is invariant to $O(\epsilon)$ under the 
infinitesimal Lie transformations:
\beqn
x'^i=x^i+\epsilon V^{x^i}, \quad
x'^j_0=x^j_0+\epsilon V^{x_0^j}, \quad
t'=t+\epsilon V^t, \label{eq:4.20}
\eeqn
and the divergence transformation:
\beqn
\ell^{0'}=\ell^0+\epsilon D_\alpha \Lambda_0^\alpha+O(\epsilon^2), 
\label{eq:4.21}
\eeqn
(here $D_0\equiv \partial/\partial t$ and $D_i\equiv \partial/\partial x_0^i$
are the total derivative operators in the jet-space consisting 
of the derivatives of $x^k({\bf x}_0,t)$ and physical quantities 
that depend on ${\bf x}_0$ and $t$) then the MHD system admits the 
Lagrangian identity:
\beqn
\hat{V}^{x^k} E_{x^k}\left(\ell^0\right)
+{\deriv{I^0}{t}}+{\deriv{I^j}{x_0^j}}=0, \label{eq:4.22}
\eeqn
where
\beqnar
&&I^0=\rho_0 u^k {\hat V}^{x^k}+V^t \ell^0+\Lambda_0^0, 
\label{eq:4.23}\\
&&I^j={\hat V}^{x^k} \left[\left( p+{\frac{B^2}{2\mu}}\right) \delta^{ks}
-{\frac{B^k B^s}{\mu}}\right] A_{sj} + V^{x_0^j} \ell^0 +\Lambda_0^j,
\label{eq:4.24}
\eeqnar
In (\ref{eq:4.23})-(\ref{eq:4.24}) we use the notation: 
$x_{kj}=\partial x^k/\partial x_0^j$, $A_{sj}=cofac(x_{sj})$ is the 
co-factor matrix corresponding to 
$x_{sj}$ and
\beqn
{\hat V}^{x^k({\bf x}_0,t)} = V^{x^k({\bf x}_0,t)}-\left(V^t{\derv{t}}
+V^{x_0^s} {\derv{x_0^s}} \right)x^k({\bf x}_0,t), \label{eq:4.25}
\eeqn
is the canonical Lie symmetry transformation generator corresponding
to the Lie transformation (\ref{eq:4.20})
(i.e. $x'^k=x^k+\epsilon {\hat V}^{x^k}$, $t'=t$, $x'^j_0=x_0^j$). 
If the Euler Lagrange equations $E_{x^k}(\ell^0)=0$ are independent 
one obtains the Lagrangian conservation law:
\beqn
{\deriv{I^0}{t}}+{\deriv{I^j}{x_0^j}}=0, \label{eq:4.25a}
\eeqn
The conservation law (\ref{eq:4.25a}) is satisfied for the case of a finite 
group of transformations but is not the only possible solution 
of the variational problem for the case of an infinite Lie pseudo group, 
because in that case, the Euler Lagrange equations are not all independent.
This latter case corresponds to Noether's second theorem.  
\end{proposition}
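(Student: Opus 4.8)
The plan is to specialise the general first-theorem computation of Section~3.1 to the Lagrangian MHD variational problem, taking the dependent variables $u^\alpha$ to be the components $x^k({\bf x}_0,t)$ of the Lagrangian map and the independent variables to be $(t,{\bf x}_0)$, with $D_0=\partial/\partial t$ and $D_j=\partial/\partial x_0^j$. The invariance hypothesis (\ref{eq:4.20})--(\ref{eq:4.21}) is precisely the divergence-invariance condition (\ref{eq:C10}), so the Noether identity (\ref{eq:C13}) applies verbatim with $L\to\ell^0$, $\hat V^{u^\alpha}\to\hat V^{x^k}$, $V^{x^i}\to(V^t,V^{x_0^j})$ and $\Lambda^i\to\Lambda_0^\alpha$, giving
\[
\hat V^{x^k}E_{x^k}(\ell^0)+D_0\!\left(W^0+\ell^0V^t+\Lambda_0^0\right)+D_j\!\left(W^j+\ell^0V^{x_0^j}+\Lambda_0^j\right)=0.
\]
It then only remains to evaluate the Noether surface current $W^\alpha$ of (\ref{eq:C12}) for this particular Lagrangian and match the result against (\ref{eq:4.23})--(\ref{eq:4.24}).

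Since $\ell^0=\ell^0({\bf x}_0,t;{\bf x},{\bf x}_t,x_{ij})$ depends only on first derivatives of ${\bf x}$, the series (\ref{eq:C12}) for $W^\alpha$ truncates after a single term and the inner variational derivative in that term reduces to an ordinary partial, so $W^0=\hat V^{x^k}\,\partial\ell^0/\partial x^k_t$ and $W^j=\hat V^{x^k}\,\partial\ell^0/\partial x_{kj}$. From (\ref{eq:2.17}), $\partial\ell^0/\partial x^k_t=\rho_0 x^k_t=\rho_0 u^k$, which with the $\ell^0V^t+\Lambda_0^0$ terms is exactly $I^0$ of (\ref{eq:4.23}). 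The one substantive calculation is $\partial\ell^0/\partial x_{kj}$: I would use the cofactor identity $\partial J/\partial x_{kj}=A_{kj}$ and $\rho=\rho_0/J$ (whence $\partial\rho/\partial x_{kj}=-(\rho/J)A_{kj}$) so that the internal-energy term gives $\partial(-J\varepsilon(\rho_0/J,S))/\partial x_{kj}=-A_{kj}\varepsilon+\rho\varepsilon_\rho A_{kj}=A_{kj}\,p$ after invoking the first law (\ref{eq:2.6}); and I would differentiate the magnetic term $-x_{il}x_{im}B_0^lB_0^m/(2\mu J)$ in two pieces, the explicit $x_{il}x_{im}$ factor contributing $-x_{kl}B_0^lB_0^j/(\mu J)=-B^kB_0^j/\mu$ and the $1/J$ factor contributing $+A_{kj}|{\bf B}|^2/(2\mu)$, using $x_{kl}B_0^l=JB^k$ and $x_{il}x_{im}B_0^lB_0^m=J^2|{\bf B}|^2$. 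Rewriting the first piece with the inverse-Jacobian relation $B_0^j=A_{sj}B^s$ (equivalently, applying $(x^{-1})_{js}=A_{sj}/J$ to $B^i=x_{il}B_0^l/J$) and collecting terms yields
\[
\deriv{\ell^0}{x_{kj}}=A_{sj}\left[\left(p+\frac{B^2}{2\mu}\right)\delta^{ks}-\frac{B^kB^s}{\mu}\right],
\]
so that $W^j+\ell^0V^{x_0^j}+\Lambda_0^j$ is precisely $I^j$ of (\ref{eq:4.24}). Substituting these $I^0$ and $I^j$ into the displayed identity above is the Lagrangian identity (\ref{eq:4.22}).

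For the conservation law (\ref{eq:4.25a}): when the symmetry family is finite-dimensional, the generators $\hat V^{x^k}$ and the Euler--Lagrange expressions $E_{x^k}(\ell^0)$ are functionally independent, so on solutions of the MHD equations --- where $E_{x^k}(\ell^0)=0$, these being exactly the Lagrangian momentum equations (\ref{eq:2.18})--(\ref{eq:2.19}) --- the first term of (\ref{eq:4.22}) vanishes, leaving $\partial_tI^0+\partial_{x_0^j}I^j=0$. For an infinite-dimensional fluid-relabelling pseudogroup the $\hat V^{x^k}$ carry an arbitrary function, the $E_{x^k}(\ell^0)$ are no longer independent, and one must instead extract the differential dependence among them via (\ref{eq:C16})--(\ref{eq:C17}); that is Noether's second theorem, and is taken up in the remainder of Section~4. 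I expect the only real obstacle to be the $\partial\ell^0/\partial x_{kj}$ computation --- keeping the cofactor bookkeeping and the $B_0\!\leftrightarrow\!B$ index manipulations straight, essentially the calculation behind (\ref{eq:2.19}) --- while everything else is a direct transcription of (\ref{eq:C10})--(\ref{eq:C14}).
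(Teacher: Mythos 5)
Your proposal is correct and follows essentially the same route as the paper's proof: identify $I^0=W^0+V^t\ell^0+\Lambda_0^0$ and $I^j=W^j+\ell^0V^{x_0^j}+\Lambda_0^j$ from the Noether identity (\ref{eq:C13}) and then evaluate $\partial\ell^0/\partial x^k_t$ and $\partial\ell^0/\partial x_{kj}$ from (\ref{eq:2.17}). The only difference is that you carry out the cofactor/first-law computation of $\partial\ell^0/\partial x_{kj}$ explicitly (correctly), whereas the paper leaves that evaluation to the cited references.
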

\begin{proof}
Using Noether's theorem (e.g. \cite{Bluman89}) we obtain:
\beqnar
&&I^0=W^0+V^t\ell^0+\Lambda_0^0\equiv {\deriv{\ell^0}{x^k_t}}
{\hat V}^{x^k}+V^t\ell^0+\Lambda_0^0, \nonumber\\
&&I^j=W^j+\ell^0 V^{x_0^j}+\Lambda_0^j\equiv {\deriv{\ell^0}{x_{kj}}}
{\hat V}^{x^k}+\ell^0 V^{x_0^j}+\Lambda_0^j, \label{eq:4.26}
\eeqnar
 for the conserved density $I^0$ and flux components $I^j$.
Using (\ref{eq:2.17}) for $\ell^0$ in (\ref{eq:4.26})
to evaluate the derivatives of $\ell^0$ with respect to $x^k_t$ and 
$x_{ij}$ gives the expressions (\ref{eq:4.23})-(\ref{eq:4.24})
for $I^0$ and $I^j$. 
 Proofs of Noether's first theorem are 
given in \cite{Bluman89} and \cite{Olver93} (see \cite{Webb05} 
 for the MHD system, 
including fully nonlinear waves). 
\end{proof}

\begin{remark}
The condition for the action (\ref{eq:2.15}) to be invariant to $O(\epsilon)$ 
under the divergence transformation of the form (\ref{eq:4.20})-(\ref{eq:4.21})
is:
\beqn
{\tilde X} \ell^0+\ell^0 \left(D_t V^t+D_{x_0^j} V^{x_0^j}\right) 
+D_t \Lambda_0^0+D_{x_0^j} \Lambda_0^j=0, \label{eq:4.27}
\eeqn
where
\beqn
{\tilde X}=V^t {\derv{t}}+V^{x^k}{\derv{x^k}}+V^{x^k_t} {\derv{x^k_t}}
+V^{x_{kj}} {\derv{x_{kj}}}+\cdots, \label{eq:4.28}
\eeqn
is the extended Lie transformation operator acting on the jet space 
of the Lie transformation (\ref{eq:4.20}). 
\end{remark}
\begin{proposition}\label{prop4.2}
The Lagrangian conservation law (\ref{eq:4.25a}) can be written as an 
Eulerian conservation law of the form (\cite{Padhye98}):
\beqn
{\deriv{F^0}{t}}+{\deriv{F^j}{x^j}}=0, \label{eq:euler1}
\eeqn
where
\beqn
{F^0}={\frac{I^0}{J}},\quad F^j={\frac{u^j I^0+x_{jk}I^k}{J}}, 
\quad (j=1,2,3), \label{eq:euler2}
\eeqn
are the conserved density $F^0$ and flux components $F^j$, $J=\det(x_{ij})$ 
and $x_{ij}=\partial x^i/\partial x^j_0$.
\end{proposition}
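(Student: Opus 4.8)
The plan is to pass between Lagrangian and Eulerian derivatives by the chain rule for the Lagrangian map ${\bf x}={\bf X}({\bf x}_0,t)$ and to exploit the two standard kinematic identities for its Jacobian matrix. First I would record the relations
\begin{equation}
\derv{t}\Big|_{{\bf x}_0}=\derv{t}\Big|_{{\bf x}}+u^k\derv{x^k},\qquad
\derv{x_0^j}=x_{nj}\derv{x^n},
\end{equation}
together with the cofactor identities $x_{ik}A_{jk}=x_{ki}A_{kj}=J\delta_{ij}$, the evolution identity $\partial J/\partial t|_{{\bf x}_0}=J\,\partial u^k/\partial x^k$ (Jacobi's formula applied to $J=\det x_{ij}$ with $\partial x_{ij}/\partial t=\partial u^i/\partial x_0^j$, using $u^i=x^i_t$), and the Piola/Jacobi identity $\partial A_{kj}/\partial x_0^j=0$ already invoked below (\ref{eq:2.19}). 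All of these are elementary consequences of $u^k=x^k_t$ and of $A_{kj}=\mathrm{cofac}(x_{kj})$.

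Next I would derive the two ``bridging'' identities that convert a Lagrangian divergence into an Eulerian one. Writing $I^0=JF^0$ and combining the $J$-evolution identity with the time chain rule gives
\begin{equation}
\deriv{I^0}{t}\Big|_{{\bf x}_0}
=J\left(\deriv{F^0}{t}\Big|_{{\bf x}}+\derv{x^k}\bigl(F^0u^k\bigr)\right),
\end{equation}
which is Reynolds' transport theorem in differential form. For the flux one first checks, using the column form $x_{mk}A_{mj}=J\delta_{kj}$ of the cofactor identity, that $A_{mj}\bigl(F^m-F^0u^m\bigr)=I^j$ with $F^m$ given by (\ref{eq:euler2}); then, moving $A_{mj}$ through the label derivative by the Piola identity and applying the space chain rule,
\begin{equation}
\deriv{I^j}{x_0^j}
=\derv{x_0^j}\Bigl[A_{mj}\bigl(F^m-F^0u^m\bigr)\Bigr]
=A_{mj}x_{nj}\,\derv{x^n}\bigl(F^m-F^0u^m\bigr)
=J\,\derv{x^m}\bigl(F^m-F^0u^m\bigr).
\end{equation}
Adding the two displays cancels the convective contribution $\partial(F^0u^k)/\partial x^k$ and leaves $\partial I^0/\partial t|_{{\bf x}_0}+\partial I^j/\partial x_0^j=J\bigl(\partial F^0/\partial t|_{{\bf x}}+\partial F^j/\partial x^j\bigr)$. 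Since the Lagrangian map is a diffeomorphism, $J\neq0$, so the hypothesis (\ref{eq:4.25a}) that the left side vanishes is equivalent to the Eulerian conservation law (\ref{eq:euler1}).

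The calculation is essentially bookkeeping, so the only real difficulty is organizational: keeping the two meanings of $\partial/\partial t$ distinct, getting the index placement in the cofactor identities right so that $A_{mj}(F^m-F^0u^m)=I^j$ inverts to precisely $F^m=(u^mI^0+x_{mk}I^k)/J$, and making sure no surface terms are silently discarded. As a cross-check, and an alternative proof I would mention as a remark, one can integrate (\ref{eq:4.25a}) over a fixed label volume $V_0$, use $d^3x=J\,d^3x_0$ to write $\int_{V_0}I^0\,d^3x_0=\int_{V(t)}F^0\,d^3x$ over the advected volume $V(t)$, convert the label surface flux by Nanson's formula $n_{0j}\,dS_0=J^{-1}x_{kj}n_k\,dS$, and apply the Reynolds transport theorem to $V(t)$; since $V_0$ is arbitrary this reproduces the same $F^0$ and $F^j$. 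I would present the differential version as the proof, with the integral version as supporting intuition, citing \cite{Padhye98} for the Eulerian form.
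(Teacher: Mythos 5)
Your derivation is correct: the chain-rule relations, Jacobi's formula $\partial J/\partial t|_{{\bf x}_0}=J\,\nabla{\bf\cdot}{\bf u}$, the cofactor identity $x_{nk}A_{mk}=J\delta_{nm}$, and the Piola identity $\partial A_{mj}/\partial x_0^j=0$ combine exactly as you state to give $\partial I^0/\partial t|_{{\bf x}_0}+\partial I^j/\partial x_0^j=J\bigl(\partial F^0/\partial t+\partial F^j/\partial x^j\bigr)$, and $J\neq 0$ closes the equivalence. The paper itself gives no proof of Proposition \ref{prop4.2}, deferring to \cite{Padhye98}, and your argument is essentially the standard one used there, so nothing further is needed.
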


\begin{proposition}\label{prop4.3}

The Lagrangian conservation law (\ref{eq:4.25a}) with conserved 
density $I^0$ of (\ref{eq:4.23}), and flux $I^j$ of (\ref{eq:4.24}), 
is equivalent to the 
Eulerian conservation law:
\beqn
{\deriv{F^0}{t}}+{\deriv{F^j}{x^j}}=0, \label{eq:euler3}
\eeqn
where
\beqnar
&&F^0=\rho u^k {\hat V}^{x^k({\bf x}_0,t)} +V^t \ell+\Lambda^0, 
\label{eq:euler4}\\ 
&&F^j={\hat V}^{x^k({\bf x}_0,t)}\left(T^{jk}-\ell\delta^{jk}\right)
+V^{x^j}\ell+\Lambda^j, \label{eq:euler5}\\
&&T^{jk}=\rho u^j u^k +\left(p+{\frac{B^2}{2\mu}}\right)\delta^{jk}
-{\frac{B^jB^k}{\mu}}, \label{eq:euler6}\\
&&\Lambda^0={\frac{\Lambda_0^0}{J}},
\quad \Lambda^j={\frac{u^j \Lambda_0^0+x_{js}\Lambda_0^s}{J}}. 
\label{eq:euler7}
\eeqnar
In (\ref{eq:euler3})-(\ref{eq:euler7}) $T^{jk}$ is the Eulerian momentum 
flux tensor (i.e. the spatial components of the stress energy tensor) 
and ${\hat V}^{x^k({\bf x}_0,t)}$ is the canonical symmetry generator 
(\ref{eq:4.25}). 
\end{proposition}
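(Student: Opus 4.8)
The plan is to derive Proposition~\ref{prop4.3} as a direct consequence of Proposition~\ref{prop4.2}: one substitutes the explicit Noether density $I^0$ of (\ref{eq:4.23}) and flux $I^j$ of (\ref{eq:4.24}) into the push-forward formulas (\ref{eq:euler2}) and re-expresses everything in Eulerian variables. Before doing so I would record the elementary dictionary to be used throughout, all of which follows from (\ref{eq:2.7})--(\ref{eq:2.9}), (\ref{eq:2.16}) and the definition of the cofactor matrix: $\ell^0=\ell J$, $\rho_0=\rho J$, $u^k=x^k_t$, and the row-expansion identity $x_{jk}A_{sk}=J\delta^{js}$ (summation over $k$). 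I would also note that the pressure/Maxwell stress appearing in (\ref{eq:4.24}) is already written in terms of the Eulerian $p$ and $\mathbf B$, so no further rewriting of that block is needed.

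For the conserved density the calculation is immediate: dividing $I^0=\rho_0u^k\hat V^{x^k}+V^t\ell^0+\Lambda_0^0$ by $J$ and using $\rho_0/J=\rho$, $\ell^0/J=\ell$, $\Lambda_0^0/J=\Lambda^0$ gives (\ref{eq:euler4}) at once. For the flux I would expand $F^j=(u^jI^0+x_{jk}I^k)/J$ term by term. The piece $u^jI^0$ contributes $\rho u^ju^k\hat V^{x^k}$ (using $\rho_0=\rho J$) together with $u^jV^t\ell$ and $u^j\Lambda_0^0/J$. In $x_{jk}I^k$ the key step is that the cofactor identity collapses $x_{jk}A_{sk}$ to $J\delta^{js}$, converting the Lagrangian stress term into $J\bigl[(p+B^2/2\mu)\delta^{mj}-B^mB^j/\mu\bigr]\hat V^{x^m}$; combined with the kinetic part this assembles precisely the Eulerian momentum flux tensor $T^{jk}$ of (\ref{eq:euler6}), so the stress contribution to $F^j$ is $\hat V^{x^k}T^{jk}$. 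The surviving pieces are $(u^jV^t+x_{jk}V^{x_0^k})\ell$ from the $\ell^0$ terms and $(u^j\Lambda_0^0+x_{jk}\Lambda_0^k)/J=\Lambda^j$ from the gauge terms.

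The only step that is not pure bookkeeping is the reorganization of the $\ell$-dependent terms, and this is where I would be most careful. Using the definition (\ref{eq:4.25}) of the canonical generator, $\hat V^{x^j}=V^{x^j}-V^tu^j-V^{x_0^s}x_{js}$, one has $u^jV^t+x_{jk}V^{x_0^k}=V^{x^j}-\hat V^{x^j}$, so $(u^jV^t+x_{jk}V^{x_0^k})\ell=V^{x^j}\ell-\hat V^{x^k}\ell\,\delta^{jk}$. Adding this to $\hat V^{x^k}T^{jk}+\Lambda^j$ produces exactly $\hat V^{x^k}(T^{jk}-\ell\delta^{jk})+V^{x^j}\ell+\Lambda^j$, which is (\ref{eq:euler5}). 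Finally, since $J=\det(x_{ij})\neq0$ the relations (\ref{eq:euler2}) are invertible, so (\ref{eq:euler3}) holds if and only if the Lagrangian law (\ref{eq:4.25a}) does; this gives the asserted equivalence. I do not anticipate any genuine obstacle beyond managing the index contractions and invoking the cofactor identity correctly — the content is really that Proposition~\ref{prop4.2} together with the Lagrangian-to-Eulerian dictionary does all the work.
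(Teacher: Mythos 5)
Your proposal is correct and follows the same route the paper intends: Proposition \ref{prop4.3} is obtained by substituting the Noether density (\ref{eq:4.23}) and flux (\ref{eq:4.24}) into the Lagrangian-to-Eulerian conversion formulas (\ref{eq:euler2}) of Proposition \ref{prop4.2}, using $\rho_0=\rho J$, $\ell^0=\ell J$, the cofactor identity $x_{jk}A_{sk}=J\delta_{js}$, and the definition (\ref{eq:4.25}) of $\hat V^{x^k}$ to reassemble $T^{jk}-\ell\delta^{jk}$. Your handling of the $\ell$-dependent terms via $u^jV^t+x_{js}V^{x_0^s}=V^{x^j}-\hat V^{x^j}$ is exactly the bookkeeping the paper leaves implicit, so no gap remains.
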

\begin{remark}
For a pure fluid relabelling symmetry $V^{\bf x}=V^t=0$, and 
Proposition \ref{prop4.3} gives:
\begin{align}
F^0=&\hat{V}^{\bf x}{\bf\cdot}(\rho {\bf u})+\Lambda^0, 
\label{eq:euler8}\\
{\bf F}=&\hat{V}^{\bf x}{\bf\cdot}\left[\rho {\bf u}\otimes{\bf u}
+\left(\varepsilon+p+\rho\Phi +\frac{B^2}{\mu_0}
-\frac{1}{2}\rho u^2\right)\sf{I}
-\frac{{\bf B}\otimes{\bf B}}{\mu_0}\right] +\boldsymbol{\Lambda}, 
\label{eq:euler9}
\end{align}
for the conserved density $F^0$ and flux ${\bf F}$ where 
$\boldsymbol{\Lambda}=(0,\Lambda^1,\Lambda^2,\Lambda^3)$. 
Note that  
$V^{x_0}$, $V^{\bf x}$ and $V^t$ must satisfy the Lie invariance condition 
(\ref{eq:4.27}). If the symmetry generators depend on arbitrary functions of 
 $x_0$, $t$ or $x$, then 
 Noether's second theorem applies. 
\end{remark}

\subsection{Fluid Relabelling Symmetries} 

Consider infinitesimal Lie transformations of the form 
(\ref{eq:4.20})-(\ref{eq:4.21}), with
\beqn
V^t=0,\quad V^{\bf x}=0,\quad V^{{\bf x}_0}\neq 0,  
\label{eq:4.31}
\eeqn
which leave the action (\ref{eq:2.15}) invariant. The extended Lie 
transformation operator ${\tilde X}$ for the case (\ref{eq:4.31}) 
has generators:
\beqnar
&&{\hat V}^{\bf x}=-V^{{\bf x}_0}{\bf\cdot}\nabla_0 {\bf x},\quad 
V^{{\bf x}_t}=-D_t \left(V^{{\bf x}_0}\right) {\bf\cdot}\nabla_0 {\bf x},
\nonumber\\
&&V^{\nabla_0{\bf x}}=-\nabla_0\left(V^{{\bf x}_0}\right)
{\bf\cdot}\nabla_0 {\bf x}. 
\label{eq:4.32}
\eeqnar
The condition (\ref{eq:4.27}) for a divergence symmetry of the action 
reduces to:
\begin{align}
&\nabla_0{\bf\cdot}\left( \rho_0 V^{{\bf x}_0}\right) 
\left( {\frac{1}{2}}|{\bf u}|^2-\Phi ({\bf x})
-{\frac{\varepsilon+p}{\rho}}\right)
 -J{\deriv{\varepsilon(\rho,S)}{S}}
V^{{\bf x}_0}{\bf\cdot}\nabla_0 S\nonumber\\
&-D_t\left(\rho_0 V^{{\bf x}_0}\right){\bf u}{\bf\cdot}
\left(\nabla_0{\bf x}\right)^T
\nonumber\\
&+{\frac{1}{\mu J}}(\nabla_0 {\bf x}){\bf\cdot}(\nabla_0{\bf x})^T
{\bf :}{\bf B}_0\bigl[\nabla_0\times(V^{{\bf x}_0}\times {\bf B}_0)
-\nabla{\bf\cdot}{\bf B}_0 V^{{\bf x}_0}\bigr]\nonumber\\
&=-\partial\Lambda_0^\alpha/\partial x_0^\alpha.
\label{eq:4.33}
\end{align}
A relatively simple class of solutions of the Lie determining 
equations (\ref{eq:4.33}) for the fluid relabelling symmetries are 
obtained by solving the equations:
\beqnar
&&\nabla_0{\bf\cdot}\left( \rho_0 V^{{\bf x}_0}\right)=0,\quad 
V^{{\bf x}_0}{\bf\cdot}\nabla_0 S=0,
\quad D_t\left(\rho_0 V^{{\bf x}_0}\right)=0, \nonumber\\
&&\nabla_0\times\left(V^{{\bf x}_0}\times {\bf B}_0\right)=0,\quad 
\nabla_0{\bf\cdot}{\bf B}_0=0,\quad \Lambda^\alpha_0=0. \label{eq:4.34}
\eeqnar
Equations (\ref{eq:4.34}) are the Lie determining equations for the fluid 
relabelling symmetries obtained by \cite{Padhye98} and \cite{Webb05}.
The condition $\nabla_0{\bf\cdot}{\bf B}_0=0$ is required 
in order that $\nabla{\bf\cdot}{\bf B}=0$. 
$\nabla_0{\bf\cdot}{\bf B}_0\neq 0$ solutions are possible for  
(\ref{eq:4.33}), but these are not physically relevant. The equations 
for $\nabla{\bf\cdot}{\bf B}\neq 0$ are relevant to numerical MHD codes 
where numerical errors in $\nabla{\bf\cdot}{\bf B}\neq 0$ occur
(e.g. \cite{Webb09a} provide a simple MHD wave solution with 
$\nabla{\bf\cdot}{\bf B}\neq 0$, which conserves total momentum 
and energy).  
\cite{Morrison80,Morrison82a,Morrison82} provides a non-canonical Poisson 
bracket for MHD both for $\nabla{\bf\cdot}{\bf B}=0$ and for  $\nabla{\bf\cdot}{\bf B}\neq 0$ 
(see also \cite{Holm83a, Holm83b}, for noncanonical Poisson brackets using 
the magnetic vector potential).   

\begin{proposition}\label{prop4.4}
The condition (\ref{eq:4.33}) for a divergence symmetry of the action 
converted to Eulerian form is:
\begin{align}
&\nabla{\bf\cdot}\left(\rho\hat{V}^{\bf x}\right) 
\left(h+\Phi({\bf x})-\frac{1}{2}|{\bf u}|^2\right) 
+\rho T\hat{V}^{\bf x}{\bf\cdot}\nabla S
+\rho {\bf u}{\bf\cdot}\left(\frac{d\hat{V}^{\bf x}}{dt}
-\hat{V}^{\bf x}{\bf\cdot}\nabla{\bf u}\right)\nonumber\\
&+\frac{\bf B}{\mu_0}{\bf\cdot}\left[-\nabla\times
\left(\hat{V}^{\bf x}\times{\bf B}\right) 
+\hat{V}^{\bf x}\nabla{\bf\cdot}{\bf B}\right]
=-\nabla_\alpha\Lambda^\alpha, 
\label{eq:4.35aa}
\end{align}
where 
\beqn
\nabla_\alpha\Lambda^\alpha=\deriv{\Lambda^0}{t}+\deriv{\Lambda^i}{x^i}, 
\label{eq:4.35ba}
\eeqn
is the  4-divergence of  
$\boldsymbol{\Lambda}= \left(\Lambda^0,\Lambda^1,\Lambda^2,\Lambda^3\right)$. 
 $\boldsymbol{\Lambda}$ is related the the Lagrange 
label space vector $\Lambda_0^\alpha$ by the transformations:
\beqn
\Lambda^\alpha=\Lambda_0^\beta B_{\beta\alpha}\equiv \Lambda^\beta_0 
\frac{x_{\alpha \beta}}{J}, \label{eq:4.35ca}
\eeqn
where $x_{\alpha\beta}=\partial x^\alpha/\partial x_0^\beta$, $J=\det(x_{ij})$ 
and $B_{\alpha\beta}=\hbox{cofac}(\partial x_0^\alpha/\partial x^\beta)$ (the 
transformations (\ref{eq:4.35ca}) are the same as those in (\ref{eq:euler7}); 
note that $\alpha$, $\beta$ have values $0,1,2,3$).
\end{proposition}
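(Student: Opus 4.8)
The plan is to obtain (\ref{eq:4.35aa}) directly from the Lagrangian-label-space determining equation (\ref{eq:4.33}) by dividing that equation by $J$ and rewriting each term through the Lagrangian-to-Eulerian dictionary, in the same spirit in which Propositions \ref{prop4.2} and \ref{prop4.3} convert a Lagrangian conservation law into an Eulerian one — except that here the object being converted is a pointwise identity, so no surface terms intervene. The dictionary entries I will use are: the canonical-generator relation $\hat V^{x^k}=-V^{x_0^s}x_{ks}$ of (\ref{eq:4.32})/(\ref{eq:4.25}), which I invert to $V^{x_0^j}x_{lj}=-\hat V^{x^l}$ and, after multiplying by $\rho_0$ and using $\rho=\rho_0/J$, to $\rho_0 V^{x_0^j}=-\rho A_{kj}\hat V^{x^k}$; the chain rule $\partial/\partial x_0^j=x_{lj}\,\partial/\partial x^l$ at fixed $t$, together with $D_t=d/dt$ and $D_t x_{ij}=\partial u^i/\partial x_0^j$; the cofactor (Piola) identities $\partial A_{kj}/\partial x_0^j=0$ and $x_{lj}A_{kj}=J\delta_{lk}$; the frozen-in relation $B^i=x_{ij}B_0^j/J$ of (\ref{eq:2.9}); and the thermodynamic relations $\rho T=\partial\varepsilon/\partial S$ and $h=(\varepsilon+p)/\rho$ from (\ref{eq:2.6}).

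For the first (enthalpy/energy) term I would compute $\nabla_0\bcdot(\rho_0 V^{{\bf x}_0})=\partial_{x_0^j}(-\rho A_{kj}\hat V^{x^k})=-A_{kj}\,\partial_{x_0^j}(\rho\hat V^{x^k})$ using the Piola identity, and then $=-A_{kj}x_{lj}\,\partial_{x^l}(\rho\hat V^{x^k})=-J\,\nabla\bcdot(\rho\hat V^{\bf x})$ using the chain rule and $x_{lj}A_{kj}=J\delta_{lk}$; multiplying by $\tfrac12|{\bf u}|^2-\Phi-h$ and dividing by $J$ (which flips the overall sign) gives the first term of (\ref{eq:4.35aa}). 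For the entropy term, $V^{{\bf x}_0}\bcdot\nabla_0 S=V^{x_0^j}x_{lj}\,\partial_{x^l}S=-\hat V^{\bf x}\bcdot\nabla S$ and $J\,\partial\varepsilon/\partial S=J\rho T$, so after dividing by $J$ that term becomes $\rho T\,\hat V^{\bf x}\bcdot\nabla S$.

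The $D_t$-term needs one integration-by-parts-in-time step: since $D_t$ at fixed ${\bf x}_0$ commutes with $\partial/\partial x_0^j$, I would write $x_{ij}D_t(\rho_0 V^{x_0^j})=D_t\big(x_{ij}\rho_0 V^{x_0^j}\big)-(D_t x_{ij})\rho_0 V^{x_0^j}=D_t(-\rho_0\hat V^{x^i})-(\partial u^i/\partial x_0^j)\rho_0 V^{x_0^j}$. Using that $\rho_0$ is constant along a particle, $D_t=d/dt$, and $(\partial u^i/\partial x_0^j)\rho_0 V^{x_0^j}=(\partial u^i/\partial x^l)(-\rho_0\hat V^{x^l})=-\rho_0(\hat V^{\bf x}\bcdot\nabla{\bf u})^i$, this collapses to $-\rho_0\big(d\hat V^{x^i}/dt-(\hat V^{\bf x}\bcdot\nabla{\bf u})^i\big)$; contracting with $u^i$, dividing by $J$, and using $\rho=\rho_0/J$ returns the third term of (\ref{eq:4.35aa}).

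The magnetic term is the step I expect to be the main obstacle, because it requires the push-forward behaviour of the bracket ${\bf C}_0\equiv\nabla_0\times(V^{{\bf x}_0}\times{\bf B}_0)-(\nabla_0\bcdot{\bf B}_0)V^{{\bf x}_0}$. The clean route is to recognise, via the identity for $\nabla\times({\bf a}\times{\bf b})$, that ${\bf C}_0=-\mathcal{L}_{V^{{\bf x}_0}}{\bf B}_0$ with ${\bf B}_0$ regarded as a vector density of weight one — which is exactly how ${\bf B}$ transforms, $B^i=x_{ij}B_0^j/J$. Since $\hat V^{\bf x}$ is minus the push-forward of $V^{{\bf x}_0}$, naturality of the Lie derivative under the Lagrangian map gives $x_{il}C_0^l=J\big[-\nabla\times(\hat V^{\bf x}\times{\bf B})+(\nabla\bcdot{\bf B})\hat V^{\bf x}\big]^i$; reading the contraction in (\ref{eq:4.33}) exactly as the magnetic-energy contraction in (\ref{eq:2.17}), namely $\sum_i(x_{ij}B_0^j)(x_{is}C_0^s)$, and using $x_{ij}B_0^j=JB^i$ together with the $1/(\mu J)$ prefactor, division by $J$ returns the magnetic term of (\ref{eq:4.35aa}) (with $\mu=\mu_0$). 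An equivalent but longer alternative is to differentiate $B^i=x_{ij}B_0^j/J$ directly with the Piola identity; either way the care points are the sign coming from $\hat V^{\bf x}=-\phi_*V^{{\bf x}_0}$ and the density-weight factor $1/J$. Finally the gauge term follows from the space-time version of the same cofactor identity: the $4\times4$ Jacobian of $(t,{\bf x}_0)\mapsto(t,{\bf x})$ has determinant $J$ (its time row being trivial), so $\tfrac1J\,\partial\Lambda_0^\alpha/\partial x_0^\alpha=\partial_t\Lambda^0+\partial_{x^i}\Lambda^i$ precisely when $\Lambda^\alpha=\Lambda_0^\beta x_{\alpha\beta}/J$, which is (\ref{eq:4.35ca}) (and (\ref{eq:euler7})). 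Collecting the five converted terms yields (\ref{eq:4.35aa}).
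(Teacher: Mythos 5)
Your conversion is correct: each dictionary entry you invoke checks out, including the two delicate steps --- the identification of the bracket $\nabla_0\times(V^{{\bf x}_0}\times{\bf B}_0)-(\nabla_0{\bf\cdot}{\bf B}_0)V^{{\bf x}_0}$ with minus the Lie derivative of ${\bf B}_0$ taken as a weight-one vector density, pushed forward through the Lagrangian map with the sign ${\hat V}^{\bf x}=-$(push-forward of $V^{{\bf x}_0}$) and the $1/J$ weight factor, and the four-dimensional Piola identity that turns $-\partial\Lambda_0^\alpha/\partial x_0^\alpha$ into $-J\nabla_\alpha\Lambda^\alpha$ under (\ref{eq:4.35ca}). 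However, your route is genuinely different from the proof the paper actually gives: in the text the conversion of (\ref{eq:4.33}) is simply deferred to \cite{Webb14b}, and the in-paper derivation (Appendix A) never touches (\ref{eq:4.33}) at all. There the Eulerian variations induced by the relabelling symmetry are written as Lie derivatives, $\delta\rho=-\nabla{\bf\cdot}(\rho\hat{V}^{\bf x})$, $\delta S=-\hat{V}^{\bf x}{\bf\cdot}\nabla S$, $\delta{\bf B}=\nabla\times(\hat{V}^{\bf x}\times{\bf B})-\hat{V}^{\bf x}\nabla{\bf\cdot}{\bf B}$ and $\delta{\bf u}=d\hat{V}^{\bf x}/dt-\hat{V}^{\bf x}{\bf\cdot}\nabla{\bf u}$, and substitution into $\delta\ell$ with the Eulerian variational derivatives (\ref{eq:A10}) reproduces (\ref{eq:4.35aa}) directly, with no Jacobian or cofactor algebra. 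What each approach buys: yours proves literally what Proposition \ref{prop4.4} asserts --- a pointwise equivalence between (\ref{eq:4.33}) and (\ref{eq:4.35aa}), with the gauge term handled exactly as in (\ref{eq:euler7}) --- at the price of the $A_{kj}$, chain-rule and push-forward bookkeeping; the paper's Appendix A is shorter and avoids that bookkeeping, but it establishes (\ref{eq:4.35aa}) as an independently derived Eulerian invariance condition rather than exhibiting the conversion, so the strict equivalence with (\ref{eq:4.33}) rests on the equivalence of the Lagrangian and Eulerian variational formulations (and on the cited reference). It is worth noting that the Lie-derivative computation at the heart of your magnetic step is the same one encoded in (\ref{eq:A7})--(\ref{eq:A8}), so the two arguments meet precisely there.
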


\begin{remark}
The invariance condition (\ref{eq:4.35aa}) combined with the  
momentum equation (\ref{eq:2.2}) and 
 mass continuity equation (\ref{eq:2.1}) 
formally implies the conservation law:
\begin{equation}
\derv{t}\left(\rho{\bf u}{\bf\cdot}\hat{V}^{\bf x}+\Lambda^0\right)
+\nabla{\bf\cdot}\left[\rho\hat{V}^{\bf x}
\left(h+\Phi-\frac{1}{2}u^2\right) +\frac{\tilde{\bf E}\times{\bf B}}{\mu_0}
+\boldsymbol{\Lambda}\right]=0, \label{eq:4.35da}
\end{equation}
where
\begin{equation}
\tilde{\bf E}=-\hat{V}^{\bf x}\times {\bf B}, \label{eq:4.35ea}
\end{equation}
is a pseudo electric field associated with  
 $\hat{V}^{\bf x}$. If one takes $\hat{V}^{\bf x}$ to have the 
dimensions of velocity, then 
${\bf S}_{\tilde{\bf E}}=\tilde{\bf E}\times{\bf B}/\mu_0$ 
is analogous to the Poynting flux. 
The conservation law (\ref{eq:4.35da}) is equivalent to 
the conservation law $\partial F^0/\partial t+\nabla{\bf\cdot} {\bf F}=0$, 
where $F^0$ and ${\bf F}$ are given by (\ref{eq:euler8}) and (\ref{eq:euler9}).
\end{remark}
\begin{proof}
 The proof 
 of (\ref{eq:4.35aa}) is given in \cite{Webb14b}.
In Appendix A, (\ref{eq:4.35aa}) is derived by using Lie dragging methods 
 (e.g. \cite{Holm98},
\cite{Cotter13} and \cite{Webb14a}, 
\cite{Ivey03}, \cite{Harrison71}, \cite{Flanders63}). 
\end{proof}

A more general class of solutions of (\ref{eq:4.33}) than (\ref{eq:4.34}) satisfy the equation 
system: 
\beqnar
&&\nabla_0{\bf\cdot}\left( \rho_0 V^{{\bf x}_0}\right)=0,\quad
V^{{\bf x}_0}{\bf\cdot}\nabla_0 S=0,
\quad u^i x_{ij}D_t\left(\rho_0 V^{x_0^j}\right)=0, \nonumber\\
&&\frac{x_{ia}x_{ib}}{J} B_0^b 
\nabla_0\times\left(V^{{\bf x}_0}\times {\bf B}_0\right)^a=0,\quad
\nabla_0{\bf\cdot}{\bf B}_0=0. \label{eq:4.35}
\eeqnar
By using ${\bf x}$, $t$ as the independent variables 
(i.e. ${\bf x}_0={\bf x}_0({\bf x},t)$), the Lie determining equations 
(\ref{eq:4.35}) can be written in the form:
\begin{align}
&\nabla{\bf\cdot}(\rho {\hat V}^{\bf x})=0, 
\quad {\hat V}^{\bf x}{\bf\cdot}\nabla S=0, \label{eq:4.35a}\\
&\rho_0{\bf u}{\bf\cdot}
\left(\frac{d{\hat V}^{\bf x}}{d\tau} 
-{\hat V}^{\bf x}{\bf\cdot}\nabla {\bf u}\right)=0, \label{eq:4.35b}\\
&{\bf B}{\bf\cdot}\nabla\times\left({\hat V}^{\bf x}\times {\bf B}\right)=0. 
\label{eq:4.35c}
\end{align}

\subsection{The Lagrange multiplier approach}

Following \cite{Hydon11} we regard the fluid 
relabelling symmetry equations as auxiliary constraints. Thus, for 
 Noether's second theorem, we consider  
 the action  (\ref{eq:C18}) with Lagrange density:
\begin{equation}
\hat{J}[{\bf x};{\bf g}]=\int\hat{L}({\bf y};[{\bf x};{\bf g}])\ d{\bf y}, 
\label{eq:4.36}
\end{equation}
where ${\bf x}=(x,y,z)$ is the Eulerian position coordinate, ${\bf y}=(t,x_0,y_0.z_0)$ are the independent variables, and ${\bf g}$ corresponds to a relabelling symmetry generated by $V^{{\bf x}_0}$. The Lagrangian:
$\hat{L}$ has the form:
\begin{align}
\hat{L}({\bf y},[{\bf x};{\bf g}])=&-V^{x_0^j}x_{ij} E_{x^i}\left(\ell^0\right)
-\nu^1 \nabla_0{\bf\cdot}\left( \rho_0 V^{{\bf x}_0}\right)
-\nu^2 V^{{\bf x}_0}{\bf\cdot}\nabla_0 S\nonumber\\
&-\nu^3_a  
\nabla_0\times\left(V^{{\bf x}_0}\times {\bf B}_0\right)^a
-\nu^4_a D_t\left(\rho_0 V^{x_0^a}\right), \label{eq:4.37}
\end{align}
where $\nu^1$, $\nu^2$, $\nu^3_a$ and $\nu^4_a$ are Lagrange multipliers. 

Setting the variational derivative 
$\delta\hat{J}/\delta V^{x_0^a}=0$ we obtain the relations
between the Euler-Lagrange equations  of the form:
\beqn
\frac{\delta{\hat J}}{\delta V^{x^a_0}}=-x_{ia} E_{x^i}\left(\ell^0\right)
+\rho_0\deriv{\nu^1}{x_0^a}
-\nu^2\deriv{S}{x_0^a}+B_0^j
\left(\deriv{\nu_a^3}{x_0^j}
-\deriv{\nu_j^3}{x_0^a}\right)
+\rho_0\deriv{\nu_a^4}{\tau}=0, \label{eq:4.38}
\eeqn
where $\partial/\partial\tau=\partial_t+{\bf u}{\bf\cdot}\nabla$ 
is the Lagrangian time derivative moving with the flow. 

Next note from (\ref{eq:2.18}):
\beqn
E_{x^i}\left(\ell^0\right)
=-\left(\rho_0\deriv{u^i}{\tau}
+\derv{x_0^j}\left\{ A_{kj} \left[
\left(p+{\frac{B^2}{2\mu}}\right)\delta^{ik}-{\frac{B^i B^k}{\mu_0}}\right]
\right\}\right), \label{eq:4.38a}
\eeqn
(we neglect gravity, i.e. we set $\Phi=0$).  Using the first law of 
thermodynamics, we obtain:
\beqn
T\nabla_0 S=\nabla_0 h-\frac{1}{\rho}\nabla_0p, \label{eq:4.38b}
\eeqn
where $T$ is the temperature and $h=(\varepsilon+p)/\rho$ is the enthalpy 
of the gas. Using (\ref{eq:4.38a}) and (\ref{eq:4.38b}) 
we find solutions of (\ref{eq:4.38}) for 
$\nu^1$, $\nu^2$, $\boldsymbol{\nu}^3$ and $\boldsymbol{\nu}^4$ given by:
\begin{align}
&\nu^1=\frac{1}{2} u^2-h, \quad h=\frac{\varepsilon+p}{\rho}, 
\quad  \nu^2=-\rho_0 T, \nonumber\\
&\boldsymbol{\nu}^3=\frac{\nabla_0 {\bf x}{\bf\cdot}{\bf B}}{\mu}, 
\quad \boldsymbol{\nu}^4=-(\nabla_0{\bf x}){\bf\cdot}{\bf u}. 
\label{eq:4.38c}
\end{align}

\subsection{Non-field aligned $V^{{\bf x}_0}$}

\cite{Webb05} investigated solutions of the fluid relabelling symmetry 
determining equations (\ref{eq:4.34}) for the cases: (a)\ $V^{{\bf x}_0}$ parallel 
to ${\bf B}_0$  and (b)\  $V^{{\bf x}_0}$ not parallel to ${\bf B}_0$. 
In case (b) equations (\ref{eq:4.34}) imply the existence of a 
potential $\psi ({\bf x}_0)$ 
such that:
\begin{equation}
V^{{\bf x}_0}\times {\bf B}_0=\nabla_0\psi, 
\quad \nabla_0{\bf\cdot} \left(\rho_0 
V^{{\bf x}_0}\right)=0,\quad V^{{\bf x}_0}{\bf\cdot}\nabla_0 S=0,\quad 
\nabla_0{\bf\cdot}{\bf B}_0=0. \label{eq:nfa1}
\end{equation}
Note that (\ref{eq:nfa1}) satisfies the Lie determining equation:
\begin{equation}
\nabla_0\times\left(V^{{\bf x}_0}\times {\bf B}_0\right)=0. \label{eq:nfa2}
\end{equation}
Using $\nabla_0{\bf\cdot} \left(\rho_0V^{{\bf x}_0}\right)=0$, (\ref{eq:nfa2})
can be written in the form:
\begin{equation}
\nabla_0\times\left(V^{{\bf x}_0}\times {\bf B}_0\right)
=\rho_0\left[{\bf b}_0,V^{{\bf x}_0}\right]=0, \label{eq:nfa3}
\end{equation}
where ${\bf b}_0={\bf B}_0/\rho_0$ and 
$\left[{\bf b}_0,V^{{\bf x}_0}\right]^i\partial/ \partial x_0^i$ is the 
Lie bracket of the tangent vector fields:
\begin{equation}
{\bf b}_0=b_0^i \derv{x_0^i},\quad V^{{\bf x}_0}=V^{x_0^i} \derv{x_0^i}. 
\label{eq:nfa4}
\end{equation}
From (\ref{eq:nfa3}), the vector fields 
$\{{\bf b}_0,V^{{\bf x}_0}\}$ describes a two dimensional  Abelian Lie 
algebra. From Frobenius theorem (e.g. \cite{Olver93}, p. 39)),  
the integral trajectories of ${\bf b}_0$ and $V^{{\bf x}_0}$ generate the 
the integral foliation $\psi({\bf x}_0)=c$ ($c=const.$). A parametric 
representation of the surface $\psi=c$ is ${\bf x}_0={\bf x}_0(\phi,\chi,c)$ 
where 
\begin{equation}
V^{{\bf x}_0}{\bf\cdot}\nabla_0=\derv{\phi}\quad \hbox{and}\quad 
{\bf b}_0{\bf\cdot}\nabla_0=\derv{\chi}\label{eq:nfa5}
\end{equation}
are directional derivatives (tangent vectors) in the surface $\psi=c$. 
We set:
\begin{align}
&\omega^1=\nabla_0\phi,\quad \omega^2=\nabla_0\chi,\quad \omega^3=\nabla_0\psi, 
\nonumber\\
&{\bf e}_1=\deriv{{\bf x}_0}{\phi},\quad {\bf e}_2=\deriv{{\bf x}_0}{\chi}, \quad 
{\bf e}_3=\deriv{{\bf x}_0}{\psi}. \label{eq:nfa6}
\end{align}
The bases $\{{\bf e}_1,{\bf e}_2,{\bf e}_3\}$ and 
$\{\omega^1,\omega^2,\omega^3\}$ are dual bases, i.e. 
$\langle\omega^i,{\bf e}_j\rangle=\delta^i_j$. Set $g_{ij}={\bf e}_i{\bf\cdot} 
{\bf e}_j$ and $g^{ij}=\omega^i{\bf\cdot}\omega^j$ as the metric tensors. 
 The bases are related 
by the equations: 
\begin{align}
&\omega^i=g^{ij} {\bf e}_j,\quad {\bf e}_i=g_{ij} \omega^j, \nonumber\\
&{\bf e}_a\times {\bf e}_b=\sqrt{|g|} \varepsilon_{abc} \omega^c, \quad 
\omega^a\times\omega^b=\frac{\varepsilon_{abc}}{\sqrt{|g|}} {\bf e}_c, 
\label{eq:nfa7}
\end{align} 
where $\varepsilon_{abc}$ is the anti-symmetric Levi-Civita tensor density
in three dimensions. 
The Lie determining equations (\ref{eq:4.34}) or (\ref{eq:nfa1}) have 
solutions:
\begin{align}
&\rho_0V^{{\bf x}_0}=\omega^2\times \omega^3=\nabla_0\chi\times \nabla_0\psi
=\nabla_0\times(\chi\nabla_0\psi), \nonumber\\
&{\bf B}_0=\omega^3\times\omega^1=\nabla_0\psi\times\nabla_0\phi
=\nabla_0\times(\psi\nabla_0\phi), \nonumber\\
&\rho_0=\frac{1}{\sqrt{|g|}}=\sqrt{|G|}, \quad S=S(\chi,\psi), \label{eq:nfa8}
\end{align}
where $g=\det(g_{ij})$ and $G=\det (g^{ij})$ are the determinants of the 
metric tensors $g_{ij}$ and $g^{ij}$ respectively. The potentials $\phi$ $\psi$ and $\chi$ are functions only of ${\bf x}_0$, i.e. they are scalar
invariants advected with the flow. 

The Eulerian symmetry generator $\hat{V}^{\bf x}$ and ${\bf B}$  
 are given by:
\begin{equation}
\hat{V}^{\bf x}=\frac{\nabla\psi\times\nabla\chi}{\rho},\quad 
{\bf B}=\nabla\psi\times\nabla\phi. \label{eq:nfa9}
\end{equation}
Because $\chi$, $\phi$ and $\psi$ are functions only of the Lagrange labels 
${\bf x}_0$ it follows that:
\begin{equation}
\frac{d\phi}{dt}=\frac{d\chi}{dt}=\frac{d\psi}{dt}=0. \label{eq:nfa10}
\end{equation}
 Note that 
$\psi$ and $\phi$ are Euler potentials for the magnetic field ${\bf B}$. 

The basic Noether identity (\ref{eq:4.22}) , namely:
\beqn
\hat{V}^{x^k} E_{x^k}\left(\ell^0\right)
+{\deriv{I^0}{t}}+\nabla_0{\bf\cdot}{\bf I}=0, \label{eq:nfa11}
\eeqn
 describes the conservation laws for the symmetries
(\ref{eq:nfa8})-(\ref{eq:nfa9}), for which
\begin{align}
&I^0=(\nabla_0\psi\times\nabla_0\chi){\bf\cdot}\nabla_0{\bf x}{\bf\cdot}{\bf u}, \label{eq:nfa12}\\
&{\bf I}=(\nabla_0\psi\times\nabla_0\chi){\bf\cdot}
\left\{\left[\frac{p+B^2/(2\mu_0)-\ell}{\rho}\right]{\sf I}
-\frac{(\nabla_0{\bf x}{\bf\cdot}{\bf B}){\bf B}_0}
{\mu_0\rho_0}\right\}, \label{eq:nfa13}
\end{align}
where ${\sf I}$ is the unit $3\times 3$ dyadic. 

\cite{Webb05}  used (\ref{eq:nfa11}) and Noether's second theorem 
to study the conservation 
law associated with the non-field aligned solutions 
(\ref{eq:nfa8})-(\ref{eq:nfa9}) of the fluid relabelling 
Lie determining equations. Unfortunately, due to the dropping of the 
$\nabla_0{\bf\cdot} {\bf I}$ term when integrating (\ref{eq:nfa11}) 
over $d^3{\bf x}^0$, there is an error in their analysis. Below we 
obtain the correct conservation law associated with the  
symmetry (\ref{eq:nfa8})-(\ref{eq:nfa9}) where $\chi({\bf x}_0)$ is an 
arbitrary function defining a pseudo Lie algebra, by using the 
Hydon and Mansfield  approach.

\begin{proposition}\label{prop4.6}
The fluid relabelling symmetry (\ref{eq:nfa8}) with generators:
\begin{align}
&V^{{\bf x}_0}=\frac{\nabla_0\chi\times\nabla_0\psi}{\rho_0},
\quad \hat{V}^{\bf x}=\frac{\nabla\psi\times\nabla\chi}{\rho}, \nonumber\\
&{\bf B}_0=\nabla_0\psi\times\nabla_0\phi,
\quad {\bf B}=\nabla\psi\times\nabla\phi, \nonumber\\
 &\rho_0=\frac{1}{\sqrt{|g}|},\quad S=S(\chi,\psi), \label{eq:nfa14}
\end{align}
used with Noether's second theorem gives rise to the generalized 
Bianchi identity:
\begin{equation}
\frac{d}{dt}\left(\rho_0
\frac{\boldsymbol{\omega}{\bf\cdot}\nabla\psi}{\rho}\right) 
-\nabla_0{\bf\cdot} \left({\bf G}\times \nabla_0\psi\right)
+\nabla_0\psi{\bf\cdot} \nabla_0\times
\left(\frac{E_{\bf x}(\ell^0){\bf\cdot}(\nabla_0{\bf x})^T}{\rho_0}\right)=0, 
\label{eq:nfa15}
\end{equation}
where $\boldsymbol{\omega}=\nabla\times{\bf u}$ is the fluid vorticity and 
\begin{equation}
{\bf G}={\bf F}{\bf\cdot}\left(\nabla_0 {\bf x}\right)^T, \quad
{\bf F}=T\nabla S+\nabla\left(\frac{1}{2} u^2-h\right)
+\frac{{\bf J}\times{\bf B}}{\rho}, \label{eq:nfa16}
\end{equation}
and ${\bf J}=\nabla\times{\bf B}/\mu_0$ is the MHD current density. 
The Eulerian form of (\ref{eq:nfa15}) 
is:
\begin{equation}
\derv{t} \left(\boldsymbol{\omega}{\bf\cdot}\nabla\psi\right) 
+\nabla{\bf\cdot}\left[(\boldsymbol{\omega}{\bf\cdot}\nabla\psi){\bf u}
-{\bf F}\times \nabla\psi\right] 
+\nabla\psi{\bf\cdot}\nabla\times
\left(\frac{E_{\bf x}\left(\ell^0\right)}{\rho_0}\right)=0.
\label{eq:nfa17}
\end{equation}
For the cases where the Euler Lagrange equations are satisfied, i.e. 
$E_{\bf x}\left(\ell^0\right)=0$, (\ref{eq:nfa15}) and (\ref{eq:nfa17}) 
reduce to the conservation laws:
\begin{align}
&\frac{d}{dt}\left(\rho_0
\frac{\boldsymbol{\omega}{\bf\cdot}\nabla\psi}{\rho}\right)
-\nabla_0{\bf\cdot} \left({\bf G}\times \nabla_0\psi\right)=0, 
\label{eq:nfa18}\\
&\derv{t} \left(\boldsymbol{\omega}{\bf\cdot}\nabla\psi\right) 
+\nabla{\bf\cdot}\left[(\boldsymbol{\omega}{\bf\cdot}\nabla\psi){\bf u}
-{\bf F}\times \nabla\psi\right]=0. \label{eq:nfa19}
\end{align}
Equation (\ref{eq:nfa19}) can be written in the form:
\begin{equation}
\derv{t} \left(\boldsymbol{\omega}{\bf\cdot}\nabla\psi\right)
+\nabla{\bf\cdot}\left[(\boldsymbol{\omega}{\bf\cdot}\nabla\psi){\bf u}
-\left(T\nabla S+\frac{{\bf J}\times{\bf B}}{\rho}\right)
\times\nabla\psi\right]=0. \label{eq:nfa20}
\end{equation}
\end{proposition}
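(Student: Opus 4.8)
The plan is to apply Noether's second theorem in the Lagrange-multiplier form developed in Section~3.2, specialized to the non-field-aligned relabelling symmetry \eqref{eq:nfa14}. The starting point is the relation \eqref{eq:4.38} between Euler--Lagrange equations produced by setting $\delta\hat{J}/\delta V^{x_0^a}=0$, together with the explicit Lagrange multipliers \eqref{eq:4.38c}. First I would substitute the solutions $\nu^1=\tfrac12 u^2-h$, $\nu^2=-\rho_0 T$, $\boldsymbol{\nu}^3=\nabla_0{\bf x}\bcdot{\bf B}/\mu$, $\boldsymbol{\nu}^4=-(\nabla_0{\bf x})\bcdot{\bf u}$ into \eqref{eq:4.38}, so that \eqref{eq:4.38} becomes a genuine differential identity (a generalized Bianchi identity) rather than a relation containing undetermined multipliers. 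This gives an identity of the schematic form $-x_{ia}E_{x^i}(\ell^0)+\partial_{x_0^a}(\text{stuff}) + (\text{curl-type terms in }\boldsymbol{\nu}^3) + \rho_0\,\partial_\tau\nu^4_a=0$ at the level of the Lagrange-label components indexed by $a$.

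The key manipulation is then to contract this $a$-indexed identity with the symmetry generator $\rho_0 V^{x_0^a}=(\nabla_0\chi\times\nabla_0\psi)^a$, or more precisely to take $\nabla_0\psi\bcdot\nabla_0\times(\cdot)$ of the identity after dividing by $\rho_0$ and transporting back via $(\nabla_0{\bf x})^T$, because the arbitrary function in the pseudo-algebra is $\chi({\bf x}_0)$ and the associated operator $E_\chi$ in Noether's second theorem \eqref{eq:C17} picks out exactly a curl with respect to the remaining label coordinates. Concretely: the multiplier $\boldsymbol{\nu}^3$ enters \eqref{eq:4.38} as $B_0^j(\partial_{x_0^j}\nu^3_a-\partial_{x_0^a}\nu^3_j)$, i.e. $({\bf B}_0\bcdot\nabla_0)\nu^3_a$ modulo a gradient; and $\rho_0\partial_\tau\nu^4_a=-\rho_0\partial_\tau[(\nabla_0{\bf x})\bcdot{\bf u}]_a$, which after recognizing $(\nabla_0{\bf x})^T\bcdot{\bf u}$ as the label-space pullback of ${\bf u}$ produces the vorticity $\boldsymbol{\omega}=\nabla\times{\bf u}$ once one takes the Eulerian curl. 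The enthalpy/temperature terms $\nu^1,\nu^2$ combine via the first law \eqref{eq:4.38b}, $T\nabla_0 S=\nabla_0 h-\rho^{-1}\nabla_0 p$, to assemble the combination ${\bf F}=T\nabla S+\nabla(\tfrac12 u^2-h)+{\bf J}\times{\bf B}/\rho$ appearing in \eqref{eq:nfa16}; note that $\nabla(\tfrac12 u^2-h)+T\nabla S=-\rho^{-1}\nabla p + \nabla(\tfrac12 u^2)$, and the magnetic stress in $E_{x^i}(\ell^0)$ supplies the ${\bf J}\times{\bf B}/\rho$ piece, exactly the Lorentz-force term that distinguishes this from Ertel's classical result. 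Pushing everything through the Lagrangian-to-Eulerian dictionary of Propositions~\ref{prop4.2}--\ref{prop4.3} (density $F^0=I^0/J$, flux as in \eqref{eq:euler2}) converts the label-space Bianchi identity \eqref{eq:nfa15} into the Eulerian identity \eqref{eq:nfa17}.

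Having established \eqref{eq:nfa15} and \eqref{eq:nfa17} as identities valid off-shell, the reduction to the conservation laws \eqref{eq:nfa18}--\eqref{eq:nfa19} is immediate: setting $E_{\bf x}(\ell^0)=0$ kills the last term in each. Finally, \eqref{eq:nfa20} follows from \eqref{eq:nfa19} by observing that the extra gradient pieces in ${\bf F}=T\nabla S+\nabla(\tfrac12 u^2-h)+{\bf J}\times{\bf B}/\rho$, namely $\nabla(\tfrac12 u^2-h)$, contribute $\nabla(\tfrac12 u^2-h)\times\nabla\psi=\nabla\times[(\tfrac12 u^2-h)\nabla\psi]$ to the flux, which is divergence-free and may be absorbed (equivalently, since $d\psi/dt=0$ by \eqref{eq:nfa10}, one can check these terms cancel against the advective rearrangement). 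The main obstacle I anticipate is bookkeeping in the second step: correctly tracking the cofactor identities $\partial A_{kj}/\partial x_0^j=0$, the commutation of $\nabla_0$-curls with the pullback $(\nabla_0{\bf x})^T$, and the passage between $\partial/\partial\tau$ and $\partial/\partial t+{\bf u}\bcdot\nabla$, so that the curl of $E_{x^i}(\ell^0)/\rho_0$ assembles cleanly into $\nabla\psi\bcdot\nabla\times(E_{\bf x}(\ell^0)/\rho_0)$ with no leftover terms; this is where the analysis of \cite{Webb05} reportedly went wrong by dropping a $\nabla_0\bcdot{\bf I}$ surface contribution, so I would be especially careful to retain all divergence terms rather than integrating them away prematurely.
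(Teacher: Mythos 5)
Your overall strategy is the paper's own: start from the Lagrange-multiplier relation (\ref{eq:4.38}) with the solutions (\ref{eq:4.38c}), exploit the fact that varying the arbitrary function $\chi$ in $V^{{\bf x}_0}=\nabla_0\chi\times\nabla_0\psi/\rho_0$ turns $\delta\hat{J}/\delta\chi=0$ into $\nabla_0{\bf\cdot}\bigl[(\rho_0^{-1}\,\delta\hat{J}/\delta V^{{\bf x}_0})\times\nabla_0\psi\bigr]=0$, identify the $\boldsymbol{\nu}^4$ piece with the vorticity density $\rho_0\boldsymbol{\omega}{\bf\cdot}\nabla\psi/\rho$ after passing to Eulerian variables, convert to (\ref{eq:nfa17}), and drop the $E_{\bf x}(\ell^0)$ term on-shell; the final observation that $\nabla(\tfrac12u^2-h)\times\nabla\psi$ is a curl and hence divergence-free, giving (\ref{eq:nfa20}), is also the right closing step.

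There is, however, one substantive slip in your bookkeeping: you assert that ``the magnetic stress in $E_{x^i}(\ell^0)$ supplies the ${\bf J}\times{\bf B}/\rho$ piece'' of ${\bf F}$. In the actual derivation the Lorentz-force term comes entirely from the multiplier $\boldsymbol{\nu}^3=\nabla_0{\bf x}{\bf\cdot}{\bf B}/\mu$, through the antisymmetrized combination ${\bf b}_0{\bf\cdot}\bigl[\nabla_0\boldsymbol{\nu}^3-(\nabla_0\boldsymbol{\nu}^3)^T\bigr]$ entering ${\bf G}$ in (\ref{eq:nfa25}); the Euler--Lagrange term is kept intact as the separate third term of the Bianchi identity (\ref{eq:nfa15}) and is the \emph{only} thing discarded when you set $E_{\bf x}(\ell^0)=0$. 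If the ${\bf J}\times{\bf B}/\rho$ contribution really originated in $E_{x^i}(\ell^0)$, it would vanish on-shell and you would be left with a purely Ertel-type law, contradicting (\ref{eq:nfa20}). Relatedly, your reduction of the $\boldsymbol{\nu}^3$ term to ``$({\bf B}_0{\bf\cdot}\nabla_0)\nu^3_a$ modulo a gradient'' is too quick: $B_0^j\,\partial\nu^3_j/\partial x_0^a=\partial(B_0^j\nu^3_j)/\partial x_0^a-\nu^3_j\,\partial B_0^j/\partial x_0^a$, and the second piece survives the curl; both halves of the antisymmetric combination are needed to assemble $({\bf J}\times{\bf B}/\rho){\bf\cdot}(\nabla_0{\bf x})^T$ inside ${\bf G}={\bf F}{\bf\cdot}(\nabla_0{\bf x})^T$. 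With the provenance of the Lorentz-force term corrected (it belongs to ${\bf G}$, not to the EL term), the rest of your plan goes through exactly as in the paper.
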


\begin{remark}
More generally,  conservation laws  
 (\ref{eq:nfa18}) and (\ref{eq:nfa19}) hold for the MHD equations, provided   
 $d\psi/dt=(\partial/\partial t+{\bf u}{\bf\cdot}\nabla)\psi=0$.
\end{remark}

\begin{proof}
First note that: 
\begin{equation}
\delta\hat{J}=\int\frac{\delta{\hat J}}{\delta V^{{\bf x}_0}}{\bf\cdot} 
\delta V^{{\bf x}_0}\ 
d^3 {\bf x}_0 dt=\int\frac{\delta \hat{J}}{\delta\chi}\delta\chi 
d^3{\bf x}_0 dt. \label{eq:nfa21}
\end{equation}
Using $V^{{\bf x}_0}$ from (\ref{eq:nfa8}), i.e. 
$V^{{\bf x}_0}=\nabla_0\chi\times\nabla_0\psi/\rho_0$, integrating by parts, 
and dropping the surface term , we obtain the generalized Bianchi 
identity:
\begin{equation}
\frac{\delta\hat{J}}{\delta\chi}= \nabla_0{\bf\cdot}
\left[\left(\frac{1}{\rho_0} 
\frac{\delta{\hat J}}{\delta V^{{\bf x}_0}}\right)\times\nabla_0\psi\right]
=0. \label{eq:nfa23}
\end{equation}
Using the expression (\ref{eq:4.38}) for $\delta\hat{J}/\delta V^{{\bf x}_0}$ 
in (\ref{eq:nfa23}) we obtain:
\begin{equation}
\frac{\delta\hat{J}}{\delta\chi}=\nabla_0\psi{\bf\cdot}\nabla_0\times \left(
-\frac{E_{\bf x}\left(\ell^0\right){\bf\cdot}(\nabla_0 {\bf x})^T}{\rho_0}
+\deriv{\boldsymbol{\nu}^4}{\tau}+{\bf G}\right), \label{eq:nfa24}
\end{equation}
where $\partial/\partial \tau\equiv d/dt$ is the Lagrangian time derivative 
following the flow and 
\begin{equation}
{\bf G}=\nabla_0\nu^1-\frac{\nu^2}{\rho_0}\nabla_0 S+{\bf b}_0{\bf\cdot}
\left[\nabla_0\boldsymbol{\nu}^3-(\nabla_0\boldsymbol{\nu}^3)^T\right], 
\label{eq:nfa25}
\end{equation}
and ${\bf b}_0={\bf B}_0/\rho_0$. 

Using the vector identity
\begin{equation}
\nabla_0{\bf\cdot}({\bf A}\times{\bf B})
=(\nabla_0\times{\bf A}){\bf\cdot}{\bf B}
-(\nabla_0\times{\bf B}){\bf\cdot}{\bf A}, \label{eq:nfa26}
\end{equation}
and (\ref{eq:4.38c}) for $\nu^1$, $\nu^2$, $\boldsymbol{\nu}^3$ 
and $\boldsymbol{\nu}^4$, (\ref{eq:nfa24}) can be written in the form:
\begin{equation}
\frac{\delta\hat{J}}{\delta\chi}=\nabla_0\psi{\bf\cdot}\nabla_0\times\left(
-\frac{E_{\bf x}\left(\ell^0\right){\bf\cdot}(\nabla_0 {\bf x})^T}{\rho_0}
\right)
+\deriv{Q}{\tau}+\nabla_0{\bf\cdot}({\bf G}\times\nabla_0\psi), 
\label{eq:nfa27}
\end{equation}
where 
\begin{equation}
Q=\nabla_0\psi{\bf\cdot}\nabla_0\times\boldsymbol{\nu}^4 
=-\nabla_0\psi{\bf\cdot}\nabla_0 \times
\left[{\bf u}{\bf\cdot}(\nabla_0 {\bf x})^T\right]. 
\label{eq:nfa28}
\end{equation}
By converting from ${\bf x}_0$ to ${\bf x}$ as independent variables  
 in (\ref{eq:nfa28}) gives:
\begin{equation}
Q=-\varepsilon_{ijk} \deriv{\psi}{x_0^i} \deriv{u^s}{x_0^j} x_{sk}
\equiv -\frac{(\boldsymbol{\omega}{\bf\cdot}
\nabla\psi)\rho_0}{\rho}, \label{eq:nfa29}
\end{equation}
where $\boldsymbol{\omega}=\nabla\times{\bf u}$ is the fluid vorticity and 
$x_{sk}=\partial x^s/\partial x_0^k$. 

Similarly, using (\ref{eq:4.38c}) for $\nu^1$, $\nu^2$, $\boldsymbol{\nu}^3$, 
the expression for ${\bf G}$ in (\ref{eq:nfa25}) reduces to (\ref{eq:nfa16}).
Using (\ref{eq:nfa16}) for ${\bf G}$ and (\ref{eq:nfa29}) for $Q$ gives:
\begin{align}
\frac{\delta\hat{J}}{\delta\chi}=&-\biggl\{ \frac{d}{dt}\left(\rho_0 
\frac{\boldsymbol{\omega}{\bf\cdot}\nabla\psi}{\rho}\right)
-\nabla_0{\bf\cdot}\left({\bf G}\times\nabla_0\psi\right)\nonumber\\
&+\nabla_0\psi{\bf\cdot}\nabla_0\times
\left(
\frac{E_{\bf x}\left(\ell^0\right){\bf\cdot}(\nabla_0 {\bf x})^T}{\rho_0}
\right)\biggr\}=0. \label{eq:nfa30}
\end{align}
Equation (\ref{eq:nfa30}) is equivalent to (\ref{eq:nfa15}). 
Converting (\ref{eq:nfa30}) to its Eulerian form gives (\ref{eq:nfa17}). 
The conservation laws (\ref{eq:nfa18}) and (\ref{eq:nfa19}) follow 
from (\ref{eq:nfa15}) and (\ref{eq:nfa17}) and by setting 
$E_{\bf x}(\ell^0)=0$. This completes the proof.
\end{proof}

Below we provide an independent verification of the conservation law 
(\ref{eq:nfa19}) or (\ref{eq:nfa20}). This derivation shows that the potential
$\psi$ only needs to be an advected invariant. Thus, for example, $\psi$ 
could be any of the advected scalar Lie dragged invariants, e.g.:
\begin{equation}
\psi_1=\frac{{\bf B}{\bf\cdot}\nabla S}{\rho},
\quad \psi_2=\frac{{\bf A}{\bf\cdot}{\bf B}}{\rho},
\quad \psi_3=\frac{{\bf B}{\bf\cdot}\nabla}{\rho}\left(\frac{{\bf A}{\bf\cdot}{\bf B}}{\rho}\right), \label{eq:nfa31}
\end{equation}
could be used for $\psi$. In the above examples, we choose 
the gauge for ${\bf A}$ so that ${\bf A}{\bf\cdot}d{\bf x}$ 
is a Lie dragged 1-form 
(see e.g. \cite{Webb14a} for more detail).

To derive (\ref{eq:nfa19}) or (\ref{eq:nfa20}) write the MHD momentum 
equation in the form:
\begin{equation}
\frac{d}{dt}({\bf u}{\bf\cdot}d{\bf x})
=\left[{\bf u}_t-{\bf u}\times\boldsymbol{\omega} 
+\nabla|{\bf u}|^2\right]{\bf\cdot}d{\bf x}={\bf F}{\bf\cdot}d{\bf x},
\label{eq:nfa32}
\end{equation}
where ${\bf F}$ is given by (\ref{eq:nfa16}). 

A more recognizable form of (\ref{eq:nfa32}) is:
\begin{equation}
{\bf u}_t-{\bf u}\times\boldsymbol{\omega} 
+\nabla|{\bf u}|^2=T\nabla S+\nabla\left(\frac{1}{2}u^2-h\right)
+\frac{{\bf J}\times {\bf B}}{\rho}. \label{eq:nfa34}
\end{equation}
Taking the curl of (\ref{eq:nfa34}) gives the fluid vorticity equation for MHD 
in the form:
\begin{equation}
\boldsymbol{\omega}_t-\nabla\times({\bf u}\times\boldsymbol{\omega})
=\nabla\times {\bf F}. \label{eq:nfa35}
\end{equation}
For $\psi$ a scalar  advected invariant $\nabla\psi$ satisfies the equation 
\begin{equation}
\derv{t}\nabla\psi+\nabla({\bf u}{\bf\cdot}\nabla\psi)\equiv 
\nabla\left(\deriv{\psi}{t}+{\bf u}{\bf\cdot}\nabla\psi\right)=0. 
\label{eq:nfa36}
\end{equation}
In the Lie dragging formalism, $\alpha=d\psi=\nabla\psi{\bf\cdot}d{\bf x}$ 
is a Lie dragged 1-form that is invariant moving with the flow (e.g. 
\cite{Tur93}, 
 \cite{Webb14a}). 

Taking the scalar product of (\ref{eq:nfa35}) with $\nabla\psi$ and 
the scalar product of (\ref{eq:nfa36}) with $\boldsymbol{\omega}$ and adding 
the resultant equations gives:
\begin{equation}
\derv{t}\left(\boldsymbol{\omega}{\bf\cdot}\nabla\psi\right) -\nabla\psi{\bf\cdot}\nabla\times({\bf u}\times\boldsymbol{\omega}) 
+(\nabla\times{\bf u}){\bf\cdot}\nabla({\bf u}{\bf\cdot}\nabla\psi)
-\nabla\psi{\bf\cdot}\nabla\times {\bf F}=0. \label{eq:nfa37}
\end{equation}
Equation (\ref{eq:nfa37}) can be re-written in the conservation form:
\begin{equation}
\derv{t}\left(\boldsymbol{\omega}{\bf\cdot}\nabla\psi\right)
+\nabla{\bf\cdot}\left[{\bf u}\times\nabla({\bf u}{\bf\cdot}\nabla\psi)
-({\bf u}\times\boldsymbol{\omega}) \times\nabla\psi
- {\bf F}\times\nabla\psi\right]=0. \label{eq:nfa38}
\end{equation}
Using the identity
$\nabla\psi\times({\bf u}\times\boldsymbol{\omega})
=(\nabla\psi{\bf\cdot}\boldsymbol{\omega}){\bf u}
-(\nabla\psi{\bf\cdot}{\bf u})\boldsymbol{\omega}$,
(\ref{eq:nfa38}) reduces to:
\begin{equation}
\derv{t}\left(\boldsymbol{\omega}{\bf\cdot}\nabla\psi\right)
+\nabla{\bf\cdot}\left[(\boldsymbol{\omega}{\bf\cdot}\nabla\psi){\bf u}
-\nabla\times[({\bf u}{\bf\cdot}\nabla\psi){\bf u}] 
-{\bf F}\times\nabla\psi\right]
=0. \label{eq:nfa40}
\end{equation}
Because the divergence of a curl is zero, (\ref{eq:nfa40}) is equivalent to 
the conservation law (\ref{eq:nfa19}) or (\ref{eq:nfa20}). This completes
our independent derivation of (\ref{eq:nfa19}).

\subsubsection{Eulerian Approach}
Proposition \ref{prop4.6} used the Lagrangian version of Noether's 
second theorem to derive the conservation laws (\ref{eq:nfa15}) 
and (\ref{eq:nfa17}). An alternative approach is to use an Eulerian version
of the variational principle (\ref{eq:4.36})-(\ref{eq:4.37}), which we 
outline below. 

From (\ref{eq:4.36}) we obtain:
\begin{equation}
\hat{J}=\int \hat{L}\ d^3{\bf x}_0 dt=\int\tilde{L}\ d^3{\bf x} dt, 
\label{eq:nfa41}
\end{equation} 
where $\hat{L}d^3{\bf x}_0=\tilde{L}\ d^3{\bf x}$ implies 
$\tilde{L}=\hat{L}/J$. 
Using $\hat{L}$ from (\ref{eq:4.37}) we obtain:
\begin{align}
\tilde{L}=&\hat{V}^{x^i} E_{x^i}\left(\ell^0\right)
+\mu^1\nabla{\bf\cdot}\left(\rho\hat{V}^{\bf x}\right)
+\mu^2 \hat{V}^{\bf x} {\bf\cdot}\nabla S\nonumber\\
&+\boldsymbol{\mu}^3{\bf\cdot}\nabla\times\left(\hat{V}^{\bf x}
\times {\bf B}\right)
+\boldsymbol{\mu}^4{\bf\cdot}\rho\left(\frac{d\hat{V}^{\bf x}}{dt}
-\hat{V}^{\bf x}{\bf\cdot}\nabla {\bf u}\right), \label{eq:nfa42}
\end{align}
where
\begin{equation}
\mu^1=\frac{1}{2} u^2-h,\quad \mu^2=-\rho T,
\quad \boldsymbol{\mu}^3=\frac{\bf B}{\mu_0}, 
\quad \boldsymbol{\mu}^4=-{\bf u}. \label{eq:nfa43}
\end{equation}
For the fluid relabelling symmetry 
$\hat{V}^{\bf x}=\nabla\psi\times\nabla\chi/\rho$ in (\ref{eq:nfa14}), 
Noether's second theorem gives the generalized Bianchi identity:
\begin{equation}
\frac{\delta\hat{J}}{\delta\chi}=\nabla{\bf\cdot}
\left[\nabla\psi\times\left(\frac{1}{\rho} \frac{\delta\hat{J}}{\delta\hat{V}^{\bf x}}\right)\right]=0. \label{eq:nfa44}
\end{equation}
From (\ref{eq:nfa42}) we obtain:
\begin{align}
\frac{\delta\hat{J}}{\delta\hat{V}^{\bf x}}
=&\frac{E_{\bf x}\left(\ell^0\right)}{J}-\rho\nabla\mu^1+\mu^2\nabla S
+{\bf B}\times\left(\nabla\times\boldsymbol{\mu}^3\right)\nonumber\\
&-\derv{t}\left(\rho\boldsymbol{\mu}^4\right) 
-\nabla{\bf\cdot}\left(\rho {\bf u}\otimes \boldsymbol{\mu}^4\right) 
-\rho\boldsymbol{\mu}^4{\bf\cdot}\left(\nabla{\bf u}\right)^T\nonumber\\
\equiv&\frac{E_{\bf x}\left(\ell^0\right)}{J} 
+\derv{t}\left(\rho {\bf u}\right) 
+\nabla{\bf\cdot}\left(\rho {\bf u}\otimes{\bf u} +p{\sf I}\right)
-{\bf J}\times {\bf B}, \nonumber\\
\equiv &\rho\biggl(\frac{E_{\bf x}\left(\ell^0\right)}{\rho_0}
+{\bf u}_t-{\bf u}\times\boldsymbol{\omega}+\nabla|{\bf u}|^2-{\bf F}\biggr), 
\label{eq:nfa45}
\end{align}
where $\boldsymbol{\omega}=\nabla\times{\bf u}$ is the fluid vorticity 
and ${\bf F}$ is defined in (\ref{eq:nfa16}). Using (\ref{eq:nfa45}) 
in (\ref{eq:nfa44}) then gives the generalized Bianchi 
identity (\ref{eq:nfa17}). The above analysis shows how 
Proposition \ref{prop4.6} 
could also be derived using an Eulerian variational approach.
%

\section{Summary and Concluding Remarks}
There are several related methods used to derive conservation laws in 
MHD and fluid dynamics. \cite{Moiseev82}, 
\cite{Tur93} and \cite{Webb14a}  used the technique of 
Lie dragging of $0$-forms, $p$-forms ($p=1,2,3$) (see also 
\cite{Kuvshinov97}).
This technique is relatively easy to use to derive conservation laws for 
invariants (geometrical objects) which are advected with the flow. 
Another method is to use Noether's theorems 
(e..g. \cite{Noether18}, \cite{Salmon82,Salmon88}, 
\cite{Padhye96a,Padhye96b}, \cite{Padhye98}, 
\cite{Olver93}, \cite{Hydon11},  \cite{Holm98}, 
\cite{Cotter13}, \cite{Webb05,Webb14b}). If the 
 action is invariant under a Lie point symmetry, up to a pure 
divergence transformation of the Lagrangian density,  Noether's 
 first theorem gives the corresponding conservation law.  
If the symmetry involves free functions 
due to an infinite dimensional Lie pseudo-group, then 
Noether's second theorem applies. 

A new conservation law related to Ertel's theorem for MHD was derived 
in Section 4.5 which is associated with a non-field aligned fluid relabelling 
symmetry derived by \cite{Webb05}. 
The conservation law for this symmetry is given in 
(\ref{eq:nfa20}), i.e. 
\begin{equation}
\derv{t} \left(\boldsymbol{\omega}{\bf\cdot}\nabla\psi\right)
+\nabla{\bf\cdot}\left[(\boldsymbol{\omega}{\bf\cdot}\nabla\psi){\bf u}
-\left(T\nabla S+\frac{{\bf J}\times{\bf B}}{\rho}\right)
\times\nabla\psi\right]=0, \label{eq:5.1}
\end{equation}
where $\boldsymbol{\omega}=\nabla\times {\bf u}$ is the fluid vorticity, 
${\bf J}=\nabla\times{\bf B}/\mu_0$ is the MHD current density, and
 $\psi=\psi({\bf x}_0)$ is a scalar invariant advected with the flow, 
i.e. $d\psi/dt=(\partial/\partial t+{\bf u}{\bf\cdot}\nabla)\psi=0$. 
The conservation law is more general than that expected from its derivation 
using Noether's second theorem. We expect that there are more general 
solutions of the fluid relabelling symmetry determining 
equations (\ref{eq:4.35a})-(\ref{eq:4.35c}) than those used in the Noether 
theorem derivation of (\ref{eq:5.1}), which could help to resolve this issue. 
\cite{Volkov95} discuss the Lie dragged invariants of MHD and gas dynamics 
in terms of supersymmetry and the odd Buttin bracket. The relationship 
between this approach and Noether's theorems is not clear at present.

In this paper we derived the generalized potential vorticity 
conservation law (\ref{eq:5.1}) using the  
\cite{Hydon11} approach to Noether's second theorem. 
There are other approaches to conservation laws and symmetries
that have been adopted in the literature. 
\cite{Sjoberg04}  obtained nonlocal conservation laws 
in 1D ideal fluid mechanics by means of potential symmetries 
of the equations (see also \cite{Webb09b}). 
A combination of scaling symmetries in MHD and 
fluid dynamics can also give rise to conservation laws 
(e.g. \cite{Webb07,Webb09b}).
 \cite{Bluman10} and \cite{Anco02} use a method
to derive conservation laws that sidesteps the use of Noether's theorem and 
applies to systems of equations that do not admit a variational 
principle. However, \cite{Ibragimov07,Ibragimov11}   investigated the augmented system, 
consisting of the original system of equations plus the adjoint system. 
Noether's theorem can then be applied to the augmented system to obtain conservation 
laws. 

\section*{Acknowledgements}
GMW acknowledges stimulating discussions with Darryl Holm
on fluid relabelling symmetries,
Lie point symmetries and Euler Poincar\'e equations for
fluids and plasmas. We acknowledge discussions on Noether's theorems
with Kesh Govinder and J. F. McKenzie. The work of RLM was supported
 by the National Research Foundation (NRF) of South Africa.
Any opinions, findings, and conclusions or
recommendations expressed in this material are those of the authors,
and therefore, the NRF does not accept any liability in regard
thereto.
 The work of GMW was carried out in part, during a visit to
the University of Kwa-Zulu Natal, Durban Westville in October 2013.
The work of GMW was supported in part by the South African,
National Institute of Theoretical Physics, and also by
CSPAR travel funds and the generosity of the director of CSPAR, G. P. Zank.


\appendix
\section*{Appendix A}
\setcounter{section}{1}
In this appendix we provide an independent derivation of the Lie 
invariance condition (\ref{eq:4.35aa}) for the MHD action:
\beqn
{\cal A}=\int \ell\ d^3x\ dt, \label{eq:A1}
\eeqn
to admit a divergence symmetry, 
where
\begin{equation}
\ell=\frac{1}{2}\rho u^2-\varepsilon(\rho,S)-\frac{B^2}{\mu}
-\rho \Phi({\bf x}), \label{eq:A2}
\eeqn
is the Eulerian MHD Lagrangian. We use Eulerian variations of the form:
\beqn
{\bf x}'={\bf x}+\epsilon {\hat V}^{\bf x}, \quad \ell'=\ell+\epsilon D_{x^\alpha} \Lambda^\alpha, \label{eq:A3}
\eeqn
where
\beqn
{\hat V}^{x^i}=V^{x^i}-V^{x_0^s} x_{is}-V^t x_{it}. \label{eq:A4}
\eeqn 

From \cite{Holm98}, \cite{Cotter13} and \cite{Webb14b} 
the Eulerian variation of the fluid velocity $\delta {\bf u}$ is given by:
\beqn
\delta{\bf u}=\left(\derv{t}+{\cal L}_{\bf u}\right)\hat{V}^{\bf x}
=\deriv{{\hat V}^{\bf x}}{t}
+\left[ {\bf u},{\hat V}^{\bf x}\right]_L
\equiv \frac{d{\hat V}^{\bf x}}{dt}
-{\hat V}^{\bf x}{\bf\cdot}\nabla {\bf u}, 
\label{eq:A5}
\eeqn
where ${\cal L}_{\bf u}$ denotes the Lie derivative with respect 
to the vector field ${\bf u}$, i.e. ${\cal L}_{\bf u}({\hat V}^{\bf x})
=[{\bf u},{\hat V}^{\bf x}]={\bf u}{\hat V}^{\bf x}
-{\hat V}^{\bf x} {\bf u}\equiv {\rm ad}_{\bf u}({\hat V}^{\bf x})_L$ 
is the left Lie bracket of the 
vector fields ${\bf u}$ and ${\hat V}^{\bf x}$. 
The variations of the other physical variables in (\ref{eq:A2}) (denoted 
generically by $\psi$) are given by 
the Lie derivative formula:
\beqn
\delta\psi=-{\cal L}_{{\hat V}^{\bf x}}(\psi). \label{eq:A6}
\eeqn

Using the Lie derivative formulae:
\begin{align}
\delta\left(\rho d^3x\right)
=&-{\cal L}_{{\hat V}^{\bf x}}
\left(\rho d^3x\right)
=-\nabla{\bf\cdot}\left(\rho{\hat V}^{\bf x}\right) d^3x, \quad
\delta S=-{\cal L}_{{\hat V}^{\bf x}}(S)
=-{{\hat V}^{\bf x}}{\bf\cdot}\nabla S, \nonumber\\
\delta \left({\bf B}{\bf\cdot} d{\bf S}\right)
=&-{\cal L}_{{\hat V}^{\bf x}}
\left({\bf B} {\bf\cdot} d{\bf S}\right)
=\left[\nabla\times\left({\hat V}^{\bf x}\times {\bf B}\right)
-{\hat V}^{\bf x} \nabla{\bf\cdot}{\bf B}
\right]{\bf\cdot}d {\bf S}, \label{eq:A7}
\end{align}
we obtain:
\begin{align}
\delta\rho=&-\nabla{\bf\cdot}\left(\rho {\hat V}^{\bf x}\right),
\quad \delta S=-{\hat V}^{\bf x}{\bf\cdot}\nabla S, \nonumber\\
\delta{\bf B}=&\nabla\times\left({\hat V}^{\bf x}\times{\bf B}\right)-
{\hat V}^{\bf x}\nabla{\bf\cdot}{\bf B}. \label{eq:A8}
\end{align}
Note that $\delta\rho$, $\delta S$ and $\delta {\bf B}$ are Eulerian 
variations in which ${\bf x}$ is held constant (cf. \cite{Newcomb62}, 
\cite{Webb05}).

The variation of the Lagrangian $\ell$ in (\ref{eq:A2}) is given by:
\beqn
\delta\ell=
\frac{\delta\ell}{\delta{\bf u}}{\bf\cdot}\delta{\bf u}
+\frac{\delta\ell}{\delta\rho}\delta\rho 
+\frac{\delta\ell}{\delta{\bf B}}{\bf\cdot}\delta{\bf B}
+\frac{\delta\ell}{\delta S}\delta S+D_{x^\alpha}\Lambda^\alpha. \label{eq:A9}
\eeqn
Using the Lagrangian density (\ref{eq:A2}) we obtain:
\begin{align}
\frac{\delta\ell}{\delta\rho}=&\frac{1}{2} u^2-\varepsilon_\rho-\Phi
\equiv \frac{1}{2} u^2-h-\Phi, 
\quad \frac{\delta\ell}{\delta{\bf u}}=\rho {\bf u}, 
\nonumber\\
\frac{\delta\ell}{\delta S}=&-\varepsilon_S=-\rho T, 
\quad \frac{\delta\ell}{\delta{\bf B}}=\frac{\bf B}{\mu_0}. 
\label{eq:A10}
\end{align}
Using (\ref{eq:A5}), (\ref{eq:A8}) and (\ref{eq:A10}) in (\ref{eq:A9}) gives:
\begin{align}
\delta\ell=&\left(h+\Phi-\frac{1}{2}\rho u^2\right) 
\nabla{\bf\cdot}\left(\rho {\hat V}^{\bf x}\right) 
+\rho T{\hat V}^{\bf x}{\bf\cdot}\nabla S\nonumber\\
&+\frac{\bf B}{\mu_0}{\bf\cdot}
\left[-\nabla\times\left({\hat V}^{\bf x}
\times{\bf B}\right)+{\hat V}^{\bf x} \nabla{\bf\cdot}{\bf B}\right]\nonumber\\
&+\rho{\bf u}{\bf\cdot}\left[\frac{d{\hat V}^{\bf x}}{dt}-{\hat V}^{\bf x}{\bf\cdot}\nabla{\bf u}\right]
+D_{x^\alpha}\Lambda^\alpha. \label{eq:A11}
\end{align}
Setting $\delta {\cal A}=0$ (i.e. $\delta\ell=0$) gives the divergence symmetry 
condition for the action given in (\ref{eq:4.35aa}).

\def\newblock{\hskip .11em plus .33em minus .07em}

\end{document}